\documentclass[journal]{IEEEtran}
\usepackage{graphicx}
\usepackage{amsmath}
\usepackage{array}
\usepackage{mdwmath}
\usepackage{amssymb}
\usepackage{mdwtab}
\usepackage{stfloats}
\usepackage[tight,footnotesize]{subfigure}
\usepackage{amsmath,amsthm}
\usepackage{threeparttable}
\usepackage{color}
\usepackage{url}
\usepackage{algpseudocode}
\usepackage{algorithm}
\usepackage{multicol}
\usepackage{amssymb}
\usepackage{epstopdf}
\usepackage[colorlinks]{hyperref}

\algrenewcommand{\algorithmicrequire}{\textbf{Input:}}
\algrenewcommand{\algorithmicensure}{\textbf{Output:}}

\newtheorem{corollary}{Corollary}
\newtheorem{proposition}{Proposition}

\newtheorem{theorem}{Theorem}
\newtheorem{remark}{Remark}

\def\BibTeX{{\rm B\kern-.05em{\sc i\kern-.025em b}\kern-.08em
    T\kern-.1667em\lower.7ex\hbox{E}\kern-.125emX}}
\begin{document}
\title {
Fluid Antenna Systems Enabling 6G HRLLC\\With Port Switching Delay
}
\author{
Xusheng Zhu,
Kai-Kit Wong, \IEEEmembership{Fellow,~IEEE},
Hao Xu,
Chenguang Rao, and
Hyundong Shin, \IEEEmembership{Fellow, IEEE}
\vspace{-9mm}


\thanks{X. Zhu, K. K. Wong, and C. Rao are with the Department of Electronic and Electrical Engineering, University College London, London, United Kingdom. K. K. Wong is also affiliated with the Department of Electronic Engineering, Kyung Hee University, Yongin-si, Gyeonggi-do 17104, Korea (e-mail: \{xusheng.zhu; kai-kit.wong; chenguang.rao\}@ucl.ac.uk).}
\thanks{H. Xu is with the National Mobile Communications Research Laboratory, Southeast University, Nanjing 210096, China (e-mail: hao.xu@seu.edu.cn)}
\thanks{H. Shin is with the Department of Electronics and Information Convergence Engineering, Kyung Hee University, Yongin-si, Gyeonggi-do 17104, Republic of Korea (e-mail: hshin@khu.ac.kr).}

\thanks{(\emph{Corresponding author: Kai-Kit Wong}).}
}

\maketitle
\begin{abstract}
Fluid antenna systems (FAS) exploit antenna position reconfigurability to harvest spatial diversity within compact form factors, making them a promising enabler for sixth-generation (6G) user terminals (UTs). However, practical port switching introduces latency and signaling overhead, which can be particularly detrimental to hyper-reliable low-latency communications (HRLLC) under finite blocklength operation. This paper investigates FAS-enabled HRLLC by jointly accounting for spatial correlation, port switching delay, and finite blocklength coding.
To obtain tractable performance characterization, we adopt an effective spatial degrees-of-freedom (DoF) representation of the correlated FAS channel and derive closed-form analytical approximations for the average block error rate (BLER) and average achievable rate. The analysis reveals a fundamental design trade-off: increasing the number of ports improves spatial diversity but reduces the effective blocklength, thereby intensifying the finite-blocklength penalty. We further analyze the structural reliability, rate, and energy-efficiency trends with respect to the port dimension and obtain the discrete optimal port configuration through finite search over the feasible integer set. In addition, a switching-delay threshold is characterized to identify the operating regime in which FAS can provide performance gains over a fixed-position antenna (FPA) system. Numerical results validate the analytical framework and show that substantial HRLLC performance gains are achievable when the port switching delay is sufficiently small.
\end{abstract}

\begin{IEEEkeywords}
Fluid antenna system (FAS), 6G, hyper-reliable low-latency communications (HRLLC), finite blocklength, port switching delay, spatial diversity.
\end{IEEEkeywords}

\vspace{-2mm}
\section{Introduction}\label{sec:introduction}
\IEEEPARstart{T}{he advent} of sixth-generation (6G) wireless is poised to catalyze a paradigm shift in global connectivity, moving from conventional human-centric communications toward machine-centric ecosystems. This shift will support mission-critical applications with unprecedented stringency, ranging from autonomous industrial robotics and remote telesurgery to immersive holographic presence and tactile internet, e.g., \cite{haq2023a, pour2024av}. These futuristic applications rely heavily on the overarching framework of hyper-reliable low-latency communications (HRLLC), imposing constraints on system performance, typically mandating packet error rates as low as $10^{-5}$ to $10^{-9}$ within sub-millisecond latency budgets \cite{zhou2024opt}.

To satisfy these exacting standards, mitigating the detrimental effects of deep channel fading is paramount. Traditionally, multiple-input multiple-output (MIMO) systems have been the cornerstone for enhancing link reliability \cite{zhu2025tris,zhu2025per}. However, a fundamental bottleneck arises in the hardware design of future 6G edge devices: the physical space constraints of compact terminals, e.g., internet-of-thing (IoT) sensors, wearables, and nanobots, etc., severely hamper the deployment of multiple fixed-position antennas (FPAs) with sufficient spacing between them. This spatial restriction caps the achievable diversity gain, rendering conventional MIMO solutions insufficient for the extreme reliability requirements of 6G HRLLC.

To overcome this physical barrier, the fluid antenna system (FAS) has emerged as a refreshing technology that redefines the concept of spatial diversity \cite{wong2021flui, wong2020per}. Unlike traditional FPAs which are bound to discrete, static locations, FAS treats the antenna as a reconfigurable physical-layer resource to broaden system design and network optimization \cite{I35_New2024aTutorial,new2025flar,hong2025ac}. While FAS is hardware agnostic, implementation examples include liquid-based structures \cite{I24_shen2024design,wu2025fas}, movable elements \cite{I30_zhu2024historical,zjpi2024ma}, and reconfigurable pixels \cite{zhang2025a,Wong2025recp}. The pixel-based designs in particular have further solidified the practical feasibility of FAS, enabling precise electromagnetic signal and information processing capabilities and high-speed port switching. Recent spatial block-correlation models can provide an alternative tractable abstraction by grouping ports into correlation blocks \cite{blockCorrelationFAS}.

The immense potential of FAS has catalyzed a surge of research efforts aiming to integrate this technology across diverse network domains. In the realm of multiple access, the concept of fluid antenna multiple access (FAMA) has been proposed to manage massive connectivity \cite{wong2022fama}. By exploiting the deep fading dips inherent in the spatial domain, FAMA allows users to avoid interference by selecting ports in which interference signals are naturally suppressed \cite{hong2025multi}. Subsequent works have extended this paradigm to enhance both orthogonal and non-orthogonal multiple access (NOMA) schemes, utilizing optimization algorithms and large language models (LLMs) to jointly optimize port selection and beamforming vectors \cite{new2024fanoma, guo2026llm}. Beyond terrestrial networks, the versatility of FAS has also been demonstrated in integrated sensing and communication (ISAC) and energy harvesting scenarios \cite{lin2025flui}. Furthermore, the principle of ``fluidity" has been generalized to reconfigurable intelligent surfaces (RIS), providing a new degree of freedom (DoF) to passive reflection \cite{xiao2025from, zhu2025fris, yao2025per, wu2025unlock}. This flexibility is particularly advantageous in dynamic three-dimensional (3D) networks involving unmanned aerial vehicles (UAVs) \cite{zhu202ffasuvg,zhu2025fasfinit} and satellite communications \cite{zhao2025del}, where channel conditions fluctuate rapidly. On the theoretical front, fundamental performance limits such as outage probability, ergodic capacity, and diversity-multiplexing tradeoffs have been rigorously characterized for various MIMO-FAS configurations, establishing a solid information-theoretic foundation for the technology \cite{new2024fas, new2024an, zhang2025onfun, zhang2025coded,zhang2025onfl,zhu25gfas}.

Despite this growing body of literature, a critical practical challenge remains largely unaddressed, particularly in the context of latency-sensitive HRLLC: the port switching delay. The vast majority of existing theoretical frameworks rely on the assumption of ideal instantaneous port switching or evaluate performance based on infinite blocklength assumptions \cite{he2024mov}. However, in realistic hardware implementations, the processes of port scanning, channel estimation, and radio-frequency (RF) chain switching consume non-negligible time resources. In the strict latency regime of 6G HRLLC, where transmission occurs over short packets governed by finite blocklength information theory, this switching overhead becomes a dominant performance-limiting factor. On one hand, a higher number of ports enhances the probability of finding a superior signal path. On the other hand, scanning more ports linearly depletes the time resources available for payload transmission.

Given a fixed total latency budget, the effective blocklength for data transmission shrinks. According to finite blocklength theory, this reduction in blocklength inevitably incurs a severe finite blocklength penalty, increasing the decoding error probability. Ignoring this temporal cost leads to a significant overestimation of system reliability and erroneous hardware dimensioning. Though sparse studies have touched upon finite blocklength in scenarios like UAV-FAS \cite{zhu2025fasfinit}, a comprehensive framework explicitly modeling the trade-off between spatial correlation, hardware switching latency, and short-packet reliability in point-to-point HRLLC is still lacking.

To bridge this gap, this paper conducts a systematic investigation into the performance of FAS-enabled 6G HRLLC systems, explicitly accounting for practical port switching delays. We establish a rigorous analytical framework rooted in finite blocklength information theory to quantify the impact of hardware overheads on system reliability and throughput. Our main contributions are summarized as follows:
\begin{itemize}
\item [1)]We propose a delay-aware FAS framework for finite blocklength HRLLC, where the effective blocklength is explicitly coupled with the number of available ports. This model captures the practical cost of port scanning, channel estimation, and switching, thereby revealing the fundamental trade-off between spatial diversity gain and signaling-latency penalty. This formulation provides a practical basis for studying FAS in latency-critical 6G user terminals.
\item [2)] We develop a tractable analytical framework for evaluating the average block error rate (BLER) and average achievable rate of spatially correlated FAS channels. To handle the analytical difficulty caused by correlated physical port gains, we use the Jakes model to construct the spatial covariance matrix and adopt an effective spatial DoF representation based on its dominant eigenmodes. Under this representation, closed-form analytical approximations are derived using the Inclusion-Exclusion Principle, while the main impact of spatial correlation is preserved through the eigenvalue spectrum.
\item [3)] We characterize the reliability-rate trade-off induced by port switching delay under finite blocklength operation. In particular, we derive switching-delay thresholds that identify the operating regimes where FAS can provide performance gains over a conventional FPA system. This analysis shows that increasing the number of ports improves spatial diversity but also shortens the effective blocklength, leading to a non-monotonic performance trend.
\item [4)] We formulate reliability-, throughput-, and energy-efficiency (EE)-oriented port-dimensioning problems over the feasible integer port set. The continuous-relaxation analysis is used to explain the structural behavior of the objective functions, while the actual discrete optimum is obtained through finite search. A tie-breaking rule is included to select the smallest optimal port number when multiple integer solutions give the same objective value. Numerical results validate the analytical framework and show that FAS can achieve substantial HRLLC gains when the switching delay is sufficiently small, whereas excessive port scanning may degrade performance due to the finite-blocklength penalty.
\end{itemize}


\textit{Notations}: Scalars, vectors, and matrices are denoted by italic, boldface lower-case, and boldface upper-case letters, respectively. The superscripts $(\cdot)^T$ and $(\cdot)^H$ denote transpose and Hermitian transpose. $\mathbb{C}^{M\times N}$ denotes the space of $M\times N$ complex matrices, and $\mathbb{E}\{\cdot\}$ denotes expectation. $|\cdot|$ denotes the absolute value or modulus. $Q(\cdot)$ and $Q^{-1}(\cdot)$ denote the Gaussian Q-function and its inverse, respectively. $J_0(\cdot)$ denotes the zero-order Bessel function of the first kind.
\vspace{-1mm}
\section{System Model}\label{sec:system_model}
\subsection{System Configuration}
Consider a point-to-point (P2P) downlink system for 6G HRLLC, where a base station (BS) serves a user terminal (UT) equipped with a FAS. To isolate the fundamental impact of UT-side port switching delay under finite blocklength operation, we consider a single-stream baseline and model the BS with one transmit antenna. This setting captures the core trade-off between FAS spatial diversity and port-scanning overhead without introducing BS beamforming, feedback, or multi-antenna training overhead. It can also be regarded as a canonical single-stream baseline, or as a special case of a multi-antenna BS system with a fixed transmit beam or one active transmit antenna. In Fig. \ref{frame}, we model the $N$ preset ports as being uniformly distributed along a linear space of normalized length $W$, where the physical length is $W\lambda$ with $\lambda$ denoting the carrier wavelength.

\vspace{-2mm}
\subsection{Spatially Correlated Channel Modeling}
\label{subsec:channel_model}
Let $\mathbf{h}=[h_1,h_2,\ldots,h_N]^T\in\mathbb{C}^{N\times1}$ denote the channel state information (CSI) vector, where $h_n$ is the channel gain at the $n$-th port. We consider a quasi-static spatially correlated Rayleigh fading channel within one HRLLC packet, where port scanning, CSI acquisition, port selection, and data transmission are assumed to occur within the channel coherence interval. Therefore, the CSI used for port selection is assumed to remain valid during the data-transmission phase. The channel is modeled as
\begin{equation}
\mathbf{h}\sim\mathcal{CN}(\mathbf{0},\mathbf{J}),
\end{equation}
where $\mathbf{J}=\mathbb{E}\{\mathbf{h}\mathbf{h}^H\}\in\mathbb{C}^{N\times N}$ is the spatial covariance matrix. Hence, each port experiences Rayleigh fading marginally, i.e., $h_n\sim\mathcal{CN}(0,\sigma^2)$, while different ports are spatially correlated.

Using the Jakes' model for isotropic scattering, the spatial correlation between the $m$-th and $n$-th ports is modeled as
\begin{equation} \label{eq:1d_jakes}
J_{m,n} = \sigma^2 J_0 \left( 2\pi \frac{|m-n|}{N-1} W \right),
\end{equation}
where $\sigma^2$ represents the large-scale fading coefficient and $J_{m,n}=\mathbb{E}\{h_m h_n^{*}\}$. In particular, $J_{n,n}=\sigma^2$, which is consistent with the marginal distribution $h_n \sim \mathcal{CN}(0,\sigma^2)$.

To facilitate performance analysis, we decompose the spatial covariance matrix as $\mathbf{J}=\mathbf{U}\mathbf{\Lambda}\mathbf{U}^H$, where $\mathbf{\Lambda}={\rm diag}(\lambda_1,\ldots,\lambda_N)$ contains the eigenvalues in descending order. The correlated channel can be generated as $\mathbf{h}=\mathbf{U}\mathbf{\Lambda}^{1/2}\mathbf{z}$, where $\mathbf{z}\sim\mathcal{CN}(\mathbf{0},\mathbf{I}_N)$ contains independent complex Gaussian entries. This decomposition characterizes the dominant spatial eigenmodes and the effective DoF induced by the Jakes covariance matrix. The post-selection SNR over the physical ports is
\begin{equation}
\gamma_{\mathrm{FAS}}=\bar{\gamma}\max_{n\in\{1,\ldots,N\}}|h_n|^2,
\end{equation}
where $\bar{\gamma}=P_{\mathrm{t}}/N_0$ is the transmit SNR.

\vspace{-2mm}
\subsection{Frame Structure and Latency Modeling}
Let $L_{\mathrm{tot}}$ denote the total latency budget in channel uses allocated for one HRLLC packet. To account for practical signaling overhead, we define $\tau$ as the aggregate overhead per port.\footnote{The overhead $\tau$ includes RF switching, port scanning, pilot signaling, channel estimation, and control signaling, rather than only pilot duration. A common $\tau$ models homogeneous ports using the same RF switching network and scanning protocol; if port-dependent overheads ${\tau_n}$ exist, $\tau$ can be interpreted as an average or conservative per-port overhead, so that $\sum_{n=1}^{N}\tau_n\approx N\tau$. Port selection is assumed to use sufficiently accurate and non-outdated CSI within the channel coherence interval; hence, the results serve as a quasi-static ideal-CSI benchmark.} Consequently, the effective blocklength available for channel coding is
\begin{equation} \label{eq:effective_L}
L(N)=L_{\mathrm{tot}}-N\tau .
\end{equation}

As shown in Fig.~\ref{frame}, each transmission block is divided into a port-selection phase and a data-transmission phase. The port-selection phase includes the practical overhead required for port probing/scanning, CSI acquisition, control signaling, and RF switching, and consumes $N\tau$ channel uses. Hence, increasing $N$ improves the chance of selecting a strong spatial path but reduces the effective blocklength available for finite-blocklength data transmission.

\begin{figure}[t]
\centering
\includegraphics[width=.75\columnwidth]{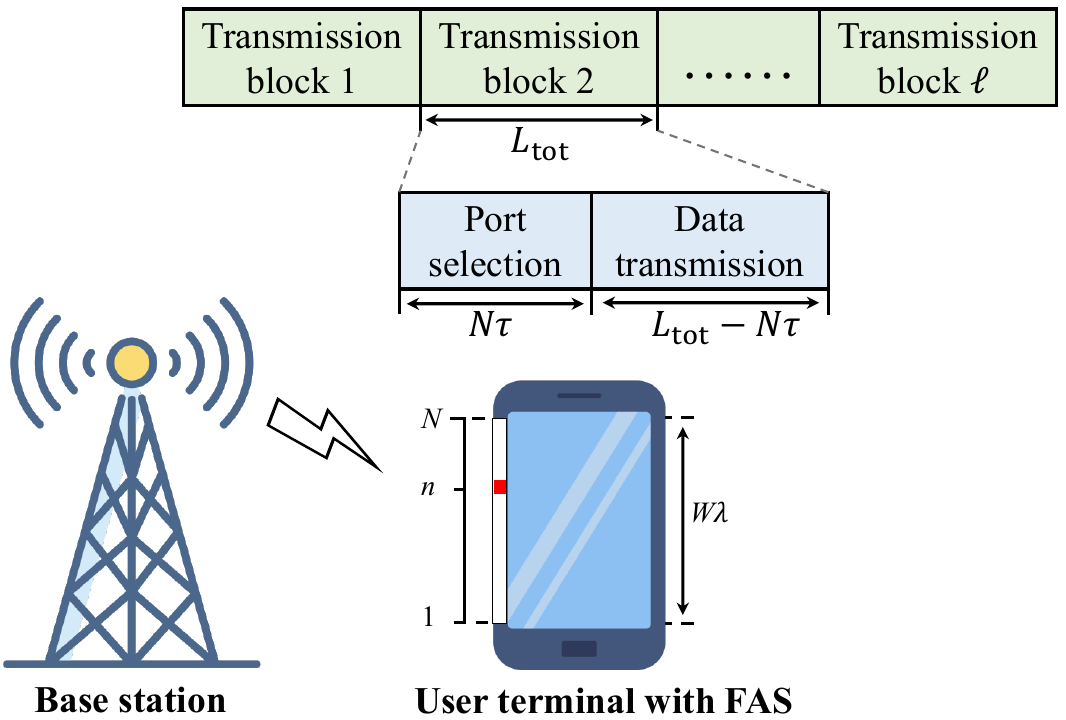}
\caption{System model and delay-aware transmission block for FAS-enabled HRLLC.}
\label{frame}
\vspace{-3mm}
\end{figure}
\subsection{Finite Blocklength Coding Model}
For a short packet containing $D$ information bits transmitted over the effective blocklength $L(N)$, the BLER is approximated using the normal approximation \cite{polyanskiy2010channel}
\begin{equation} \label{eq:bler_approx}
\epsilon(\gamma, N) \approx Q\left( \frac{C(\gamma) - \frac{D}{L(N)}}{\sqrt{V(\gamma)/L(N)}} \right),
\end{equation}
where $C(\gamma) = \log_2(1+\gamma)$ represents the Shannon capacity, and $V(\gamma) = (1-(1+\gamma)^{-2})(\log_2 e)^2$ denotes the channel dispersion.
The normal approximation is used under the condition $L(N)\geq L_{\min}$, with $L_{\min}=100$ unless otherwise specified. This framework facilitates the joint optimization of reliability and latency, allowing us to identify the optimal port number $N^{\star}$ that balances diversity benefits against the signaling overhead.

\vspace{-2mm}
\section{Performance Analysis}\label{sec:performance_analysis}
In this section, we derive a closed-form analytical approximation for the average BLER to quantify the reliability-latency trade-off.

\vspace{-2mm}
\subsection{Statistical Characterization}
Recalling that $\mathbf{J}=\mathbf{U}\mathbf{\Lambda}\mathbf{U}^H$, the exact distribution of the maximum over correlated physical port gains generally requires the joint distribution of $\{|h_n|^2\}_{n=1}^{N}$ and cannot be factorized into a product of marginal distributions. To obtain a tractable analytical form while retaining the dominant impact of spatial correlation, we adopt an effective independent spatial-DoF representation based on the dominant eigenmodes of $\mathbf{J}$. Specifically, the analytical counterpart of the post-selection SNR is expressed as
\begin{equation}
    \tilde{\gamma}_{\mathrm{FAS}}
    =
    \bar{\gamma}\max_{m\in\{1,\ldots,M\}}\lambda_m |z_m|^2,
    \label{eq:effective_snr}
\end{equation}
where $z_m\sim\mathcal{CN}(0,1)$ are mutually independent and $M$ is the effective spatial rank. Under this representation, the Inclusion-Exclusion Principle can be applied to obtain a tractable closed-form analytical approximation.

\begin{proposition} \label{prop:exact_pdf}
Under the effective independent spatial-DoF representation in \eqref{eq:effective_snr}, the probability density function (PDF) of the analytical post-selection SNR $\tilde{\gamma}_{\rm FAS}$ is approximated by a linear combination of exponential functions
\begin{equation} \label{eq:pdf_exact}
f_{\tilde{\gamma}_{\rm FAS}}(x) = \sum_{k=1}^{M} (-1)^{k+1} \sum_{\mathbf{s} \in \mathcal{S}_k} \frac{\Xi_{\mathbf{s}}}{\bar{\gamma}} \exp\left( - \frac{\Xi_{\mathbf{s}}}{\bar{\gamma}} x \right),
\end{equation}
where $\mathcal{S}_k$ represents the set of all $k$-combinations of indices $\{1,2,\ldots,M\}$. The effective spatial rank $M$ is determined by the following energy-retention criterion:
\begin{equation}
M=\min\left\{m\in\{1,\ldots,N\}:\frac{\sum_{i=1}^{m}\lambda_i}{\sum_{i=1}^{N}\lambda_i}\geq \eta\right\}.
\label{eq:effective_rank}
\end{equation}
where the eigenvalues are sorted in descending order and $\eta=0.99$ is used unless otherwise specified. This criterion ensures that at least $99\%$ of the covariance energy is retained. The residual normalized energy is given by
\begin{equation}
\rho_M=1-\frac{\sum_{i=1}^{M}\lambda_i}{\sum_{i=1}^{N}\lambda_i}\leq 1-\eta.
\label{eq:residual_energy}
\end{equation}
Thus, $M$ is selected in a reproducible way rather than by an empirical truncation rule. Fig.~\ref{Fig10} illustrates the eigenvalue spectrum of $\mathbf{J}$ and shows that the dominant spatial energy is concentrated in the first few eigenmodes due to correlation-induced DoF saturation.

For a specific combination $\mathbf{s}\in\mathcal{S}_k$, the aggregate decay rate is defined as $\Xi_{\mathbf{s}}\triangleq\sum_{j\in\mathbf{s}}\lambda_j^{-1}$.
\end{proposition}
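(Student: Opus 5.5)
The plan is to compute the CDF of $\tilde{\gamma}_{\rm FAS}$ in \eqref{eq:effective_snr} from the independence of the $z_m$, expand the resulting product into a signed sum of exponentials, and differentiate. Since $z_m\sim\mathcal{CN}(0,1)$, each $|z_m|^2$ is unit-mean exponential. Hence each branch $X_m\triangleq\bar{\gamma}\lambda_m|z_m|^2$ is exponential with mean $\bar{\gamma}\lambda_m$, and
\begin{equation*}
\Pr\{X_m\le x\}=1-\exp\!\left(-\frac{x}{\bar{\gamma}\lambda_m}\right),\quad x\ge 0 .
\end{equation*}
The eigenvalues retained here are those with $m\le M$. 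Since $\mathbf{J}$ is a positive semidefinite covariance matrix and $M$ is the smallest index reaching the threshold $\eta$ in \eqref{eq:effective_rank}, I will check that these $\lambda_m$ are strictly positive. Otherwise a zero eigenvalue could be dropped without changing the energy ratio, contradicting minimality. This makes $\lambda_m^{-1}$, and hence $\Xi_{\mathbf{s}}$, well defined.

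By mutual independence, the CDF of the maximum is the product of the marginals:
\begin{equation*}
F_{\tilde{\gamma}_{\rm FAS}}(x)=\prod_{m=1}^{M}\left(1-e^{-x/(\bar{\gamma}\lambda_m)}\right).
\end{equation*}
Next I expand this product, which is exactly the inclusion--exclusion step mentioned before the proposition. Each term of the expansion corresponds to a subset $\mathbf{s}\subseteq\{1,\ldots,M\}$, with sign $(-1)^{|\mathbf{s}|}$. The exponentials in that term multiply into $\exp(-x\sum_{j\in\mathbf{s}}\lambda_j^{-1}/\bar{\gamma})=\exp(-\Xi_{\mathbf{s}}x/\bar{\gamma})$. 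Grouping subsets by cardinality $k$ gives
\begin{equation*}
F_{\tilde{\gamma}_{\rm FAS}}(x)=1+\sum_{k=1}^{M}(-1)^{k}\sum_{\mathbf{s}\in\mathcal{S}_k}\exp\!\left(-\frac{\Xi_{\mathbf{s}}}{\bar{\gamma}}x\right).
\end{equation*}

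Differentiating term by term, which is legitimate for a finite sum, cancels the constant and turns $(-1)^k$ into $(-1)^{k+1}$. Each term acquires the factor $\Xi_{\mathbf{s}}/\bar{\gamma}$, which yields exactly \eqref{eq:pdf_exact} for $x\ge0$, with the density equal to zero for $x<0$.

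As a sanity check, I will verify that the density integrates to one. Integrating each exponential term gives $1$, so the total is $\sum_{k=1}^{M}(-1)^{k+1}\binom{M}{k}=1$. I will also confirm that the expression reduces to the exponential density when $M=1$.

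The main obstacle is not the algebra but the word ``approximated''. The exact post-selection SNR is the maximum over the $N$ correlated physical ports, not over $M$ independent eigenmodes. The proposition is an exact statement only about the surrogate \eqref{eq:effective_snr}. I would therefore state clearly that the derivation is exact under the effective-DoF representation. The link to the physical model rests only on the truncation criterion: the neglected eigenmodes carry a fraction $\rho_M\le 1-\eta$ of the total energy $\mathrm{tr}(\mathbf{J})=N\sigma^2$, by \eqref{eq:residual_energy}. I would support this with a remark rather than a rigorous bound, since a rigorous distributional error bound for the maximum of correlated ports is not available from earlier results. The agreement would be confirmed numerically.
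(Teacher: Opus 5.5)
Your proposal is correct and follows the paper's proof essentially step by step: you treat the eigenmode branches $\bar{\gamma}\lambda_m|z_m|^2$ as independent exponentials, take the product-form CDF of their maximum, expand it by inclusion--exclusion into $1+\sum_{k=1}^{M}(-1)^k\sum_{\mathbf{s}\in\mathcal{S}_k}e^{-\Xi_{\mathbf{s}}x/\bar{\gamma}}$, and differentiate term by term. Your extra checks are valid refinements that the paper omits: strict positivity of the retained $\lambda_m$ so that $\Xi_{\mathbf{s}}$ is well defined (for $M=1$ the minimality argument does not apply, but $\lambda_1\geq\mathrm{tr}(\mathbf{J})/N=\sigma^2>0$ gives it directly), and the normalization $\sum_{k=1}^{M}(-1)^{k+1}\binom{M}{k}=1$.
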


\begin{figure}[t]
\centering
\includegraphics[width=.7\columnwidth]{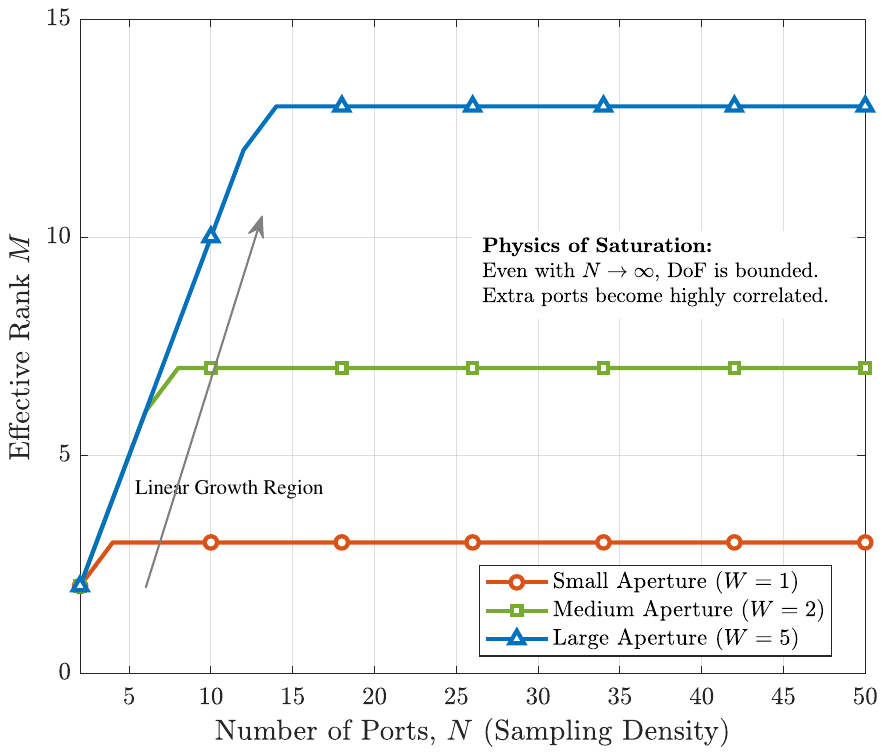}\\
\caption{Eigenvalue spectrum of the spatial correlation matrix illustrating the effective DoF $M$ used in performance analysis.}\label{Fig10}
\vspace{-3mm}
\end{figure}
\begin{IEEEproof}

Under the effective representation in \eqref{eq:effective_snr}, define
\begin{equation}
Y_m=\bar{\gamma}\lambda_m |z_m|^2,\quad m=1,\ldots,M .
\end{equation}
Since $z_m\sim\mathcal{CN}(0,1)$ are mutually independent, $Y_m$ are independent exponential random variables with
\begin{equation}
F_{Y_m}(x)=
1-\exp\left(-\frac{x}{\bar{\gamma}\lambda_m}\right),
\quad x\geq0.
\end{equation}
Thus, the cumulative distribution function (CDF) of $\tilde{\gamma}_{\rm FAS}=\max_{m}Y_m$ is
\begin{equation}
\begin{aligned}
F_{\tilde{\gamma}_{\rm FAS}}(x)
&=
\prod_{m=1}^{M}
\left(
1-\exp\left(-\frac{x}{\bar{\gamma}\lambda_m}\right)
\right) \\
&=
1+
\sum_{k=1}^{M}(-1)^k
\sum_{\mathbf{s}\in\mathcal{S}_k}
\exp\left(-\frac{\Xi_{\mathbf{s}}}{\bar{\gamma}}x\right),
\label{eq:cdf_expanded}
\end{aligned}
\end{equation}
where $\Xi_{\mathbf{s}}\triangleq\sum_{j\in\mathbf{s}}\lambda_j^{-1}$. The product form follows from the independence of the effective eigenmode variables $Y_m$, rather than from the independence of the physical port gains. The second equality follows from the inclusion-exclusion expansion. Differentiating \eqref{eq:cdf_expanded} with respect to $x$ yields \eqref{eq:pdf_exact}.
\end{IEEEproof}

The subset expansion in \eqref{eq:pdf_exact} contains \(2^M-1\) exponential terms. Therefore, for a fixed \(N\), the evaluation complexity of the PDF or CDF is \(\mathcal{O}(2^M)\) after the eigenvalues and subset rates \(\{\Xi_{\mathbf{s}}\}\) are obtained. The complexity is exponential in the effective spatial rank \(M\), but not directly in the physical port number \(N\). Since compact FAS apertures usually exhibit strong spatial correlation, \(M\) is typically much smaller than \(N\) under the energy-retention rule in \eqref{eq:effective_rank}. If all \(N\) eigenmodes are used, the full product-form CDF is denoted by \(F_N(x)\); using the first \(M\) eigenmodes gives \(F_M(x)\). Their difference satisfies
\begin{equation}
0\leq F_M(x)-F_N(x)
\leq
\sum_{i=M+1}^{N}\exp\left(-\frac{x}{\bar{\gamma}\lambda_i}\right),
\label{eq:cdf_truncation_bound}
\end{equation}
which provides a quantitative indicator of the truncation effect. Since the average BLER and achievable-rate expressions are subsequently evaluated based on the same analytical SNR distribution, this CDF-level discrepancy also indicates how eigenmode truncation affects the derived performance metrics. Hence, the truncation accuracy can be controlled by increasing $\eta$.

\begin{remark}
\textbf{(Eigenvalues and  Correlation)} The eigenvalues $\{\lambda_n\}$ serve as a spectral representation of the spatial correlation structure defined in \eqref{eq:1d_jakes}. As the port density increases within the fixed linear aperture $W\lambda$, the spatial correlation strengthens, causing smaller eigenvalues to vanish. This implies that the effective DoF $M$ are upper-bounded by the physical dimension of the FAS, limiting the diversity gain.
\end{remark}

\vspace{-2mm}
\subsection{Closed-Form Average BLER Derivation}
The average BLER is defined as the expectation of the instantaneous error probability over the fading statistics
\begin{equation} \label{eq:avg_bler_integral}
\bar{\epsilon}(N) = \int_{0}^{\infty} Q\left( \frac{C(x) - \frac{D}{L(N)}}{\sqrt{V(x)/L(N)}} \right) f_{\tilde{\gamma}_{\rm FAS}}(x) dx,
\end{equation}
which is analytically intractable due to the nonlinearity of the Gaussian Q-function. To circumvent this intractability while maintaining high accuracy in the HRLLC regime, we employ the first-order linearization approximation \cite{zhu2025fasfinit}. In this manner, the instantaneous BLER is approximated as
\begin{equation} \label{eq:linearized_Q}
\epsilon(x) \approx \begin{cases}
1, & x < \delta_L \\
\frac{1}{2} - \beta_N (x - \theta_N), & \delta_L \le x \le \delta_H \\
0, & x > \delta_H,
\end{cases},
\end{equation}
in which the linearization parameters are functions of $L(N)$ given by $\theta_N = 2^{\frac{D}{L(N)}} - 1$ and $\beta_N = \sqrt{\frac{L(N)}{2\pi (2^{2D/L(N)}-1)}}$. The integration limits are $\delta_L = \theta_N - \frac{1}{2\beta_N}$ and $\delta_H = \theta_N + \frac{1}{2\beta_N}$.

Substituting \eqref{eq:pdf_exact} and \eqref{eq:linearized_Q} into \eqref{eq:avg_bler_integral}, we obtain a closed-form analytical approximation for $\bar{\epsilon}(N)$ as follows.

\begin{theorem}\label{thm:closed_form_bler}
The average BLER of the P2P FAS under finite blocklength constraints is approximated by
\begin{multline}\label{eq:closed_form_result}
\bar{\epsilon}(N) \approx F_{{\tilde\gamma}_{\rm FAS}}(\delta_L) + \left( \frac{1}{2} + \beta_N \theta_N \right) \left[ F_{{\tilde\gamma}_{\rm FAS}}(\delta_H) - F_{{\tilde\gamma}_{\rm FAS}}(\delta_L) \right]\\
- \beta_N \sum_{k=1}^{M} (-1)^{k+1} \sum_{\mathbf{s} \in \mathcal{S}_k} \left[ \mathcal{H}(\delta_L, \Xi_{\mathbf{s}}) - \mathcal{H}(\delta_H, \Xi_{\mathbf{s}}) \right],
\end{multline}
where $\mathcal{H}(y, \Xi) = \exp\left( - \frac{\Xi}{\bar{\gamma}}y \right) \left( y + \frac{\bar{\gamma}}{\Xi} \right)$ and $F_{\tilde{\gamma}_{\rm FAS}}(\cdot)$ is the CDF derived in \eqref{eq:cdf_expanded}.
\end{theorem}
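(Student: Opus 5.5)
Substituting the piecewise-linear surrogate \eqref{eq:linearized_Q} into \eqref{eq:avg_bler_integral} splits the average BLER into three integrals over $[0,\delta_L)$, $[\delta_L,\delta_H]$ and $(\delta_H,\infty)$; the last one vanishes. The first is simply $\int_0^{\delta_L} f_{\tilde\gamma_{\rm FAS}}(x)\,dx = F_{\tilde\gamma_{\rm FAS}}(\delta_L)$. Here we assume $\delta_L\ge 0$. If instead $\delta_L<0$, the lower limit is replaced by $0$ and $F_{\tilde\gamma_{\rm FAS}}(0)=0$, so the same formula holds with $\delta_L$ clipped at zero. I would note this convention but not dwell on it.

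For the middle interval, write the integrand as
\[
\Big(\tfrac12+\beta_N\theta_N\Big) f_{\tilde\gamma_{\rm FAS}}(x) - \beta_N\, x f_{\tilde\gamma_{\rm FAS}}(x).
\]
The constant part integrates to $\big(\tfrac12+\beta_N\theta_N\big)\big[F_{\tilde\gamma_{\rm FAS}}(\delta_H)-F_{\tilde\gamma_{\rm FAS}}(\delta_L)\big]$ directly, using the CDF in \eqref{eq:cdf_expanded} from Proposition~\ref{prop:exact_pdf}. The remaining piece is $-\beta_N\int_{\delta_L}^{\delta_H} x f_{\tilde\gamma_{\rm FAS}}(x)\,dx$. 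Inserting the exponential-mixture PDF \eqref{eq:pdf_exact} and exchanging the finite sums with the integral reduces it to a signed sum over $k$ and $\mathbf{s}\in\mathcal{S}_k$ of
\[
\int_{\delta_L}^{\delta_H} x\,\frac{\Xi_{\mathbf{s}}}{\bar\gamma}e^{-\Xi_{\mathbf{s}}x/\bar\gamma}\,dx .
\]

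The key elementary step is to integrate each such term by parts. With $a=\Xi_{\mathbf{s}}/\bar\gamma$, an antiderivative of $x\,a e^{-ax}$ is $-e^{-ax}(x+1/a)$. The definite integral is therefore
\[
e^{-a\delta_L}\Big(\delta_L+\tfrac1a\Big)-e^{-a\delta_H}\Big(\delta_H+\tfrac1a\Big)=\mathcal{H}(\delta_L,\Xi_{\mathbf{s}})-\mathcal{H}(\delta_H,\Xi_{\mathbf{s}}).
\]
Multiplying by $-\beta_N(-1)^{k+1}$ and summing gives the third term of \eqref{eq:closed_form_result}. Collecting the three contributions yields the theorem.

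The only real obstacle is bookkeeping. The signs of the inclusion–exclusion coefficients in \eqref{eq:pdf_exact} must be kept consistent with the CDF expansion in \eqref{eq:cdf_expanded}. It also needs checking that the $\mathcal{H}$ terms appear with the ordering $\delta_L$ minus $\delta_H$. As a sanity check, I would verify the $M=1$ case against a direct computation for a single exponential.
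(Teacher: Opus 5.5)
Your proposal is correct and follows the paper's proof essentially step for step: the same three-region split, with Term I as $F_{\tilde\gamma_{\rm FAS}}(\delta_L)$, the constant part of the transition integrand giving the CDF difference, and the antiderivative $-e^{-ax}(x+1/a)$ producing $\mathcal{H}(\delta_L,\Xi_{\mathbf{s}})-\mathcal{H}(\delta_H,\Xi_{\mathbf{s}})$. Your remark about clipping $\delta_L$ at zero is a small refinement the paper omits and does not change the argument; it should be applied to every occurrence of $\delta_L$, including inside $\mathcal{H}$ and Term II, because the closed-form CDF expansion does not vanish for negative arguments.
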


\begin{IEEEproof}
Substituting the linearized BLER approximation \eqref{eq:linearized_Q} into the definition of average BLER \eqref{eq:avg_bler_integral}, we partition the integration domain $[0, \infty)$ into three distinct regions: the outage region $[0, \delta_L)$, the transition region $[\delta_L, \delta_H]$, and the error-free region $(\delta_H, \infty)$. As such, we have
\begin{align} \label{eq:proof_split}
\bar{\epsilon}(N) &\approx \int_{0}^{\delta_L}  f_{\tilde{\gamma}_{\mathrm{FAS}}}(x) \, dx\! + \! \int_{\delta_L}^{\delta_H} \left[ \frac{1}{2} \!- \!\beta_N (x \!-\! \theta_N) \right] f_{\tilde{\gamma}_{\mathrm{FAS}}}(x) \, dx \nonumber \\
&= \underbrace{F_{\tilde{\gamma}_{\mathrm{FAS}}}(\delta_L)}_{\text{Term I}} + \underbrace{\left( \frac{1}{2} + \beta_N \theta_N \right) \int_{\delta_L}^{\delta_H} f_{\tilde{\gamma}_{\mathrm{FAS}}}(x) \, dx}_{\text{Term II}} \nonumber \\
&\quad - \underbrace{\beta_N \int_{\delta_L}^{\delta_H} x f_{\tilde{\gamma}_{\mathrm{FAS}}}(x) \, dx}_{\text{Term III}}.
\end{align}

In this step, we proceed to evaluate these terms sequentially. Term I is directly identified as the CDF evaluated at the lower limit $\delta_L$. The second term, Term II, represents the probability mass accumulated within the transition region scaled by a constant factor. This simplifies to the difference of the CDF values at the boundaries, given by
\begin{equation}
\text{ Term II} = \left( \frac{1}{2} + \beta_N \theta_N \right) \left[ F_{\tilde{\gamma}_{\mathrm{FAS}}}(\delta_H) - F_{\tilde{\gamma}_{\mathrm{FAS}}}(\delta_L) \right].
\end{equation}

The evaluation of the third term, Term III, requires integrating the product $x f_{\tilde{\gamma}_{\mathrm{FAS}}}(x)$. By substituting the analytical PDF approximation in \eqref{eq:pdf_exact} into the integral, we obtain
\begin{equation}
\int_{\delta_L}^{\delta_H} x f_{\tilde{\gamma}_{\mathrm{FAS}}}(x) dx = \sum_{k=1}^{M} (-1)^{k+1} \sum_{\mathbf{s} \in \mathcal{S}_k} \int_{\delta_L}^{\delta_H} x \left( \frac{\Xi_{\mathbf{s}}}{\bar{\gamma}} e^{-\frac{\Xi_{\mathbf{s}}}{\bar{\gamma}} x} \right) dx.
\end{equation}
Leveraging the standard integration identity
\begin{equation}
\int t a e^{-a t} dt = -e^{-a t}\left(t + \frac{1}{a}\right),
\end{equation}
and defining $a = \Xi_{\mathbf{s}}/\bar{\gamma}$, the definite integral for each exponential component is calculated as
\begin{equation}
\left[ -e^{-\frac{\Xi_{\mathbf{s}}}{\bar{\gamma}} x} \left( x + \frac{\bar{\gamma}}{\Xi_{\mathbf{s}}} \right) \right]_{\delta_L}^{\delta_H} = \mathcal{H}(\delta_L, \Xi_{\mathbf{s}}) - \mathcal{H}(\delta_H, \Xi_{\mathbf{s}}),
\end{equation}
where we have defined $\mathcal{H}(y, \Xi) = \exp\left( - \frac{\Xi}{\bar{\gamma}}y \right) \left( y + \frac{\bar{\gamma}}{\Xi} \right)$.

Finally, substituting the evaluated Term II and Term III back into \eqref{eq:proof_split} yields the closed-form approximation in \eqref{eq:closed_form_result}.
\end{IEEEproof}

\vspace{-2mm}
\subsection{Asymptotic Analysis and Diversity Order}\label{subsec:asymptotic}
To quantify the diversity benefits and finite blocklength penalties, we analyze the average BLER in the high-SNR regime, i.e., $\bar{\gamma} \to \infty$. The diversity order is defined as
\begin{equation}
G_d = - \lim_{\bar{\gamma} \to \infty} \frac{\log \bar{\epsilon}(N)}{\log \bar{\gamma}}.
\end{equation}

Applying the first-order Taylor expansion $1-e^{-y}\approx y$ to the effective-mode CDF in \eqref{eq:cdf_expanded}, the individual product terms simplify to
\begin{equation}
1 - \exp\left( - \frac{x}{\bar{\gamma} \lambda_n} \right) \approx \frac{x}{\bar{\gamma} \lambda_n}.
\end{equation}

Defining the \textit{array gain} as $G_a = \left(\prod\limits_{n=1}^{M} \lambda_n\right)^{\frac{1}{M}}$, the asymptotic CDF is rewritten as
\begin{equation} \label{eq:cdf_asymp}
F_{\tilde{\gamma}_{\rm FAS}}^{\infty}(x) \approx \prod_{n=1}^{M} \left( \frac{x}{\bar{\gamma} \lambda_n} \right) = \left( \frac{x}{G_a \bar{\gamma}} \right)^M.
\end{equation}

Substituting the corresponding asymptotic PDF into the BLER integral yields the following corollary.

\begin{corollary} \label{cor:asymptotic}
As $\bar{\gamma} \to \infty$, the average BLER decays asymptotically as
\begin{equation} \label{eq:bler_asymptotic}
\bar{\epsilon}^{\infty}(N) \approx \mathcal{K}(M, L(N)) \cdot \left( G_a \cdot \bar{\gamma} \right)^{-M},
\end{equation}
where $M$ is the effective spatial rank determined by \eqref{eq:effective_rank}, which characterizes the effective diversity order under the adopted eigenmode representation, and $\mathcal{K}(M, L(N))$ is the coding gain penalty coefficient given by
\begin{equation}
\mathcal{K}(M, L(N)) = \int_{0}^{\infty} Q\left( \frac{C(x) - \frac{D}{L(N)}}{\sqrt{V(x)/L(N)}} \right) \frac{d}{dx}(x^M) \, dx.
\end{equation}
\end{corollary}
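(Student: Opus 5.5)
The plan is to substitute the high-SNR expansion \eqref{eq:cdf_asymp} directly into the average-BLER integral \eqref{eq:avg_bler_integral}, keeping the exact normal-approximation error function rather than its linearization. Write $g_N(x)=Q\bigl((C(x)-D/L(N))/\sqrt{V(x)/L(N)}\bigr)$. As $\bar{\gamma}\to\infty$, the CDF of $\tilde{\gamma}_{\rm FAS}$ behaves as $(x/(G_a\bar{\gamma}))^M$ on any region where $x/\bar{\gamma}\to 0$. Its PDF therefore behaves as $(G_a\bar{\gamma})^{-M}\frac{d}{dx}(x^M)$. This follows because the derivative of $\prod_{n}(1-e^{-x/(\bar{\gamma}\lambda_n)})$ is $\sum_n \frac{1}{\bar{\gamma}\lambda_n}e^{-x/(\bar{\gamma}\lambda_n)}\prod_{m\neq n}(\cdot)$, whose leading term is $M x^{M-1}/(G_a\bar{\gamma})^M$. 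Inserting this into \eqref{eq:avg_bler_integral} and pulling out the factor $(G_a\bar{\gamma})^{-M}$ gives $\mathcal{K}(M,L(N))(G_a\bar{\gamma})^{-M}$, and hence diversity order $G_d=M$.

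To make the limit legitimate, I would write
\begin{equation*}
\bar{\epsilon}(N)(G_a\bar{\gamma})^{M}=\int_0^\infty g_N(x)\,(G_a\bar{\gamma})^{M}f_{\tilde{\gamma}_{\rm FAS}}(x)\,dx
\end{equation*}
and apply dominated convergence. Pointwise convergence of the scaled PDF to $Mx^{M-1}$ is the expansion above. For domination, I would use $e^{-y}\le 1$ and $1-e^{-y}\le y$ to get the bound $(G_a\bar{\gamma})^{M}f_{\tilde{\gamma}_{\rm FAS}}(x)\le M x^{M-1}$ uniformly in $\bar{\gamma}$. This reduces everything to showing $\int_0^\infty g_N(x)x^{M-1}dx<\infty$, that is, that $\mathcal{K}$ is finite.

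The main obstacle is this integrability of $\mathcal{K}$, i.e., the tail behavior of $g_N(x)$ as $x\to\infty$. For large $x$, $C(x)$ grows like $\log_2 x$ while $V(x)\to(\log_2 e)^2$, so the argument of $Q$ grows like $\sqrt{L(N)}\,\log_2 x/\log_2 e$. Using $Q(t)\le e^{-t^2/2}$, this yields $g_N(x)\le \exp\bigl(-c\,L(N)(\ln x)^2\bigr)$ for some constant $c>0$, which decays faster than any power of $x$. Hence $g_N(x)x^{M-1}$ is integrable at infinity. Near $x=0$ we have $g_N\le 1$ and $x^{M-1}$ is integrable on bounded intervals. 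Given the standing condition $L(N)\ge L_{\min}$, this tail argument should go through. I would also remark that, if the linearized form \eqref{eq:linearized_Q} is used instead, $\mathcal{K}$ reduces to the explicit finite quantity $\delta_L^M+\int_{\delta_L}^{\delta_H}(\tfrac12-\beta_N(x-\theta_N))Mx^{M-1}dx$. This is consistent with Theorem~\ref{thm:closed_form_bler}, since it is what its leading-order expansion in $1/\bar{\gamma}$ gives.
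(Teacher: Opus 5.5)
Your proposal is correct and follows the paper's route: expand each factor $1-e^{-x/(\bar{\gamma}\lambda_n)}\approx x/(\bar{\gamma}\lambda_n)$, take the resulting asymptotic PDF $(G_a\bar{\gamma})^{-M}\frac{d}{dx}(x^M)$, and substitute it into the exact normal-approximation BLER integral. The paper stops at this formal substitution. You go further: the uniform bound $(G_a\bar{\gamma})^{M} f_{\tilde{\gamma}_{\rm FAS}}(x)\le Mx^{M-1}$ and the Gaussian tail estimate giving $\mathcal{K}(M,L(N))<\infty$ make it a rigorous limit via dominated convergence, which the paper does not supply.
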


\begin{remark}
\textbf{(Diversity vs. Latency Penalty)} Equation \eqref{eq:bler_asymptotic} explicitly decouples the impact of port number $N$ into two competing factors:
\begin{enumerate}
\item \textbf{Diversity Gain ($M$):} Increasing $N$ enhances the rank $M$ (up to the spatial correlation limit), steepening the BLER decay slope.
\item \textbf{Coding Penalty ($\mathcal{K}$):} Increasing $N$ reduces the effective blocklength $L(N)$, increasing $\mathcal{K}$ and shifting the BLER curve horizontally (SNR penalty).
\end{enumerate}
This confirms that while FAS improves reliability via diversity, latency overhead imposes a fundamental limit, necessitating the optimization of $N^{\star}$.
\end{remark}

\vspace{-2mm}
\subsection{Reliability-Based Operational Region: FAS vs. FPA}
To provide a reliability benchmark for the proposed port-dimensioning strategy, we characterize the operating condition under which the $N$-port FAS achieves a lower asymptotic BLER than a conventional single-antenna FPA $(N=1,\tau=0)$.

For a standard FPA, setting $N=1$, $M=1$, and $G_a=1$ in \eqref{eq:bler_asymptotic} gives
\begin{equation} \label{eq:bler_inf_fpa}
\bar{\epsilon}_{\mathrm{FPA}}^{\infty}
\approx
\mathcal{K}(1,L_{\mathrm{tot}})\bar{\gamma}^{-1}.
\end{equation}

\begin{proposition} \label{prop:bler_boundary}
Under the high-SNR approximation and the local log-log sensitivity model of the coding penalty, the $N$-port FAS provides superior or equivalent reliability compared with the FPA, i.e., $\bar{\epsilon}_{\mathrm{FAS}}^{\infty}(N)\leq\bar{\epsilon}_{\mathrm{FPA}}^{\infty}$, when the unit switching delay approximately satisfies
\begin{equation} \label{eq:tau_bler_ineq}
\tau
\leq
\frac{L_{\mathrm{tot}}}{N}
\left[
1-
\left(
\frac{
G_a^{M}\bar{\gamma}^{M-1}
\mathcal{K}(1,L_{\mathrm{tot}})
}
{
\mathcal{K}(M,L_{\mathrm{tot}})
}
\right)^{-1/k}
\right]
\triangleq
\tau_{\mathrm{eq}}^{\epsilon},
\end{equation}
where $k>0$ denotes the local blocklength-sensitivity exponent of the coding-penalty coefficient.
\end{proposition}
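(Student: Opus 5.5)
The plan is to compare the two asymptotic expressions directly, namely \eqref{eq:bler_asymptotic} for the $N$-port FAS and \eqref{eq:bler_inf_fpa} for the FPA. The only $\tau$-dependence sits in the coding-penalty coefficient through $L(N)=L_{\mathrm{tot}}-N\tau$, so I will make that dependence explicit through a local model and then solve the resulting inequality for $\tau$. The diversity part $(G_a\bar{\gamma})^{-M}$ depends only on the eigenvalues of $\mathbf{J}$ through $M$ and $G_a$, so it is held fixed throughout. The FPA side, with $N=1$ and $\tau=0$, also has no $\tau$-dependence.

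\textbf{Step 1: formalize the local log-log sensitivity model.} The statement refers to this model without defining it, so I will take it to mean a power law for the coding penalty anchored at the full blocklength:
\begin{equation*}
\mathcal{K}(M,L(N))\approx \mathcal{K}(M,L_{\mathrm{tot}})\left(\frac{L(N)}{L_{\mathrm{tot}}}\right)^{-k},
\qquad k=-\left.\frac{\partial \log \mathcal{K}(M,L)}{\partial \log L}\right|_{L=L_{\mathrm{tot}}}>0 .
\end{equation*}
The sign $k>0$ reflects the property stated in the preceding remark: a shorter blocklength increases $\mathcal{K}$. Justifying this model is the main obstacle. The exponent $k$ is only a local slope of $\log\mathcal{K}$ against $\log L$, and $\mathcal{K}$ has no closed form. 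I would therefore argue plausibility only, not prove it. In $\mathcal{K}$, the Q-function argument is a smooth function of $\sqrt{L}$ and of $D/L$. Consequently $\log\mathcal{K}$ is smooth in $\log L$, and a first-order expansion around $\log L_{\mathrm{tot}}$ yields exactly the power law above. The result is then stated as approximate, valid when $N\tau$ is moderate relative to $L_{\mathrm{tot}}$.

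\textbf{Step 2: substitute and rearrange.} Inserting the model into \eqref{eq:bler_asymptotic}, the condition $\bar{\epsilon}^{\infty}_{\mathrm{FAS}}(N)\le\bar{\epsilon}^{\infty}_{\mathrm{FPA}}$ becomes
\begin{equation*}
\mathcal{K}(M,L_{\mathrm{tot}})\left(1-\frac{N\tau}{L_{\mathrm{tot}}}\right)^{-k}(G_a\bar{\gamma})^{-M}\le \mathcal{K}(1,L_{\mathrm{tot}})\bar{\gamma}^{-1}.
\end{equation*}
Multiplying by $(G_a\bar{\gamma})^{M}/\mathcal{K}(M,L_{\mathrm{tot}})>0$ gives
\begin{equation*}
\left(1-\frac{N\tau}{L_{\mathrm{tot}}}\right)^{-k}\le R,
\qquad
R\triangleq\frac{G_a^{M}\bar{\gamma}^{M-1}\mathcal{K}(1,L_{\mathrm{tot}})}{\mathcal{K}(M,L_{\mathrm{tot}})} .
\end{equation*}

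\textbf{Step 3: solve for $\tau$.} Feasibility of the normal approximation requires $L(N)\ge L_{\min}>0$, so $1-N\tau/L_{\mathrm{tot}}>0$. The map $t\mapsto t^{-1/k}$ is decreasing on $(0,\infty)$, so raising both sides to the power $-1/k$ reverses the inequality:
\begin{equation*}
1-\frac{N\tau}{L_{\mathrm{tot}}}\ge R^{-1/k}.
\end{equation*}
Solving this gives $\tau\le \frac{L_{\mathrm{tot}}}{N}\bigl(1-R^{-1/k}\bigr)=\tau^{\epsilon}_{\mathrm{eq}}$, which is \eqref{eq:tau_bler_ineq}.

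I will close with a remark on when the threshold is nonnegative. Nonnegativity requires $R\ge1$, and this holds at high SNR whenever $M\ge2$, since $\bar{\gamma}^{M-1}\to\infty$. If $R<1$, no admissible $\tau\ge0$ makes the FAS better than the FPA. Equality in the bound corresponds to equal reliability.
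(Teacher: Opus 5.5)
Your proposal is correct and follows the paper's proof essentially step for step. The paper likewise reduces the comparison to $\mathcal{K}(M,L(N))/\mathcal{K}(1,L_{\mathrm{tot}})\le G_a^{M}\bar{\gamma}^{M-1}$, defines $k$ as the local log-log slope of $\mathcal{K}(M,L)$ at $L_{\mathrm{tot}}$, uses the same first-order expansion to obtain the factor $\bigl((L_{\mathrm{tot}}-N\tau)/L_{\mathrm{tot}}\bigr)^{-k}$, and solves for $\tau$; your explicit use of $L(N)\ge L_{\min}>0$ and of the monotonicity of $t\mapsto t^{-1/k}$, together with your remark that $\tau_{\mathrm{eq}}^{\epsilon}\ge0$ requires $R\ge1$, spell out what the paper condenses into ``solving the above inequality for $\tau$.''
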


\begin{IEEEproof}
The condition for FAS to outperform FPA in reliability is
$\bar{\epsilon}_{\mathrm{FAS}}^{\infty}(N)/
\bar{\epsilon}_{\mathrm{FPA}}^{\infty}\leq1$.
Substituting \eqref{eq:bler_asymptotic} and \eqref{eq:bler_inf_fpa} into this ratio yields
\begin{equation} \label{eq:bler_ratio_substituted}
\frac{
\mathcal{K}(M,L(N))G_a^{-M}\bar{\gamma}^{-M}
}
{
\mathcal{K}(1,L_{\mathrm{tot}})\bar{\gamma}^{-1}
}
\leq 1 .
\end{equation}
Equivalently,
\begin{equation} \label{eq:bler_ratio_simplified}
\frac{\mathcal{K}(M,L(N))}
{\mathcal{K}(1,L_{\mathrm{tot}})}
\leq
G_a^{M}\bar{\gamma}^{M-1}.
\end{equation}
To characterize the blocklength sensitivity of the coding penalty, we define the local log-log sensitivity exponent around $L_{\mathrm{tot}}$ as
\begin{equation}
k
\triangleq
-\left.
\frac{\partial \ln \mathcal{K}(M,L)}
{\partial \ln L}
\right|_{L=L_{\mathrm{tot}}}.
\label{eq:local_sensitivity_k}
\end{equation}
A first-order Taylor expansion of $\ln \mathcal{K}(M,L)$ with respect to $\ln L$ around $L_{\mathrm{tot}}$ gives
\begin{equation}
\ln
\frac{\mathcal{K}(M,L(N))}
{\mathcal{K}(M,L_{\mathrm{tot}})}
\approx
-k
\ln
\frac{L(N)}
{L_{\mathrm{tot}}},
\end{equation}
and therefore
\begin{equation}
\frac{\mathcal{K}(M,L(N))}
{\mathcal{K}(M,L_{\mathrm{tot}})}
\approx
\left(
\frac{L_{\mathrm{tot}}-N\tau}
{L_{\mathrm{tot}}}
\right)^{-k}.
\label{eq:penalty_ratio_model}
\end{equation}
This approximation should be interpreted as a local analytical model around $L_{\mathrm{tot}}$, rather than as a universal power-law identity over all blocklengths.
Substituting \eqref{eq:penalty_ratio_model} into \eqref{eq:bler_ratio_simplified} gives
\begin{equation}
\left(
\frac{L_{\mathrm{tot}}-N\tau}
{L_{\mathrm{tot}}}
\right)^{-k}
\leq
\frac{
G_a^{M}\bar{\gamma}^{M-1}
\mathcal{K}(1,L_{\mathrm{tot}})
}
{
\mathcal{K}(M,L_{\mathrm{tot}})
}.
\end{equation}
Solving the above inequality for $\tau$ yields \eqref{eq:tau_bler_ineq}.
\end{IEEEproof}

\begin{remark}
\textbf{(Reliability vs. Switching Delay)} The threshold $\tau_{\mathrm{eq}}^{\epsilon}$ quantifies when the diversity gain of FAS can compensate for the finite-blocklength penalty caused by port scanning. For $M>1$, increasing $\bar{\gamma}$ enlarges the factor $G_a^{M}\bar{\gamma}^{M-1}$, thereby increasing the allowable switching delay. In contrast, a smaller effective blocklength increases the coding penalty $\mathcal{K}(M,L(N))$, which reduces the feasible delay region.
\end{remark}

\vspace{-2mm}
\section{Average Achievable Rate Analysis}\label{sec:achievable_rate}
Here, we derive the average achievable rate to reveal how the spatial diversity gain compensates the rate loss induced by latency overhead.

\vspace{-2mm}
\subsection{Finite Blocklength Rate Model}
According to finite blocklength information theory, the maximal achievable rate $R^*$, measured in bits per channel use (bpcu), for a given blocklength $L$, error probability $\epsilon$, and instantaneous SNR $\gamma$ is approximated by
\begin{equation} \label{eq:rate_approximation}
R^*(\gamma, L, \epsilon) \approx C(\gamma) - \sqrt{\frac{V(\gamma)}{L}} Q^{-1}(\epsilon).
\end{equation}
Here, bpcu is used throughout the paper as the rate unit. Under the normalized-bandwidth convention, it is numerically equivalent to bps/Hz; however, all achievable-rate figures are labeled in bpcu for consistency.

Substituting the effective blocklength $L(N) = L_{\mathrm{tot}} - N\tau$, the average achievable rate of the FAS-assisted system is derived by taking the expectation over the fading channel statistics given by
\begin{equation} \label{eq:avg_rate_def}
\bar{R}(N) = \mathbb{E}_{\tilde{\gamma}_{\mathrm{FAS}}} \left[ C(\tilde{\gamma}_{\mathrm{FAS}}) - \sqrt{\frac{V(\tilde{\gamma}_{\mathrm{FAS}})}{L(N)}} Q^{-1}(\epsilon) \right].
\end{equation}

To facilitate tractable system optimization, we adopt the widely accepted high-SNR approximation for the dispersion term where $V(\gamma) \approx (\log_2 e)^2$. Consequently, the average rate decouples into a Shannon capacity term and a finite blocklength penalty term expressed as
\begin{equation} \label{eq:rate_decoupled}
\bar{R}(N) \approx \underbrace{\mathbb{E}[C(\tilde{\gamma}_{\mathrm{FAS}})]}_{\mathcal{R}_{\text{Shannon}}} - \underbrace{\frac{Q^{-1}(\epsilon) \log_2 e}{\sqrt{L_{\mathrm{tot}} - N\tau}}}_{\mathcal{P}_{\text{Penalty}}(N)}.
\end{equation}

\vspace{-2mm}
\subsection{Closed-Form Approximation of Achievable Rate}
The penalty term $\mathcal{P}_{\text{Penalty}}(N)$ in \eqref{eq:rate_decoupled} is deterministic and depends only on the number of ports $N$. Therefore, after applying the high-SNR dispersion approximation and the effective spatial-DoF representation, the remaining task is to obtain a closed-form analytical approximation for the Shannon capacity term $\mathcal{R}_{\text{Shannon}}$.

\begin{theorem} \label{thm:achievable_rate}
The average achievable rate of the spatially correlated P2P FAS under the finite blocklength constraint $L(N)$ is approximated by
\begin{multline} \label{eq:rate_closed_form}
\bar{R}(N) \approx \frac{1}{\ln 2} \sum_{k=1}^{M} (-1)^{k+1} \sum_{\mathbf{s} \in \mathcal{S}_k} e^{\frac{\Xi_{\mathbf{s}}}{\bar{\gamma}}} E_1\left( \frac{\Xi_{\mathbf{s}}}{\bar{\gamma}} \right)\\
- \frac{Q^{-1}(\epsilon) \log_2 e}{\sqrt{L_{\mathrm{tot}} - N\tau}},
\end{multline}
where $E_1(x)=\int_x^{\infty}t^{-1}e^{-t}dt$ denotes the exponential integral function, $M$ is the effective spatial rank determined by \eqref{eq:effective_rank}, and the aggregate decay rate is defined as $\Xi_{\mathbf{s}}=\sum_{j\in\mathbf{s}}\lambda_j^{-1}$.
\end{theorem}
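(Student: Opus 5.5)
The plan is to start from the decoupled form \eqref{eq:rate_decoupled}. There the penalty term $\mathcal{P}_{\text{Penalty}}(N)=Q^{-1}(\epsilon)\log_2 e/\sqrt{L_{\mathrm{tot}}-N\tau}$ is deterministic and already appears verbatim in \eqref{eq:rate_closed_form}. So the whole task reduces to evaluating $\mathcal{R}_{\text{Shannon}}=\mathbb{E}[\log_2(1+\tilde{\gamma}_{\mathrm{FAS}})]$ in closed form under the effective-rank representation \eqref{eq:effective_snr}, using the exponential-mixture PDF of Proposition~\ref{prop:exact_pdf}.

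Rather than integrate $\log_2(1+x)$ against the PDF directly, I will integrate by parts against the complementary CDF. For a nonnegative random variable this gives
\begin{equation*}
\mathbb{E}[\ln(1+\tilde{\gamma}_{\mathrm{FAS}})]=\int_0^\infty \frac{1-F_{\tilde{\gamma}_{\mathrm{FAS}}}(x)}{1+x}\,dx .
\end{equation*}
The boundary term vanishes because $\ln(1+x)$ grows only logarithmically while $1-F$ decays exponentially. From \eqref{eq:cdf_expanded},
\begin{equation*}
1-F_{\tilde{\gamma}_{\mathrm{FAS}}}(x)=\sum_{k=1}^{M}(-1)^{k+1}\sum_{\mathbf{s}\in\mathcal{S}_k}\exp\left(-\frac{\Xi_{\mathbf{s}}}{\bar{\gamma}}x\right).
\end{equation*}
Since the sum is finite, it can be exchanged with the integral. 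Each term then reduces to the standard integral
\begin{equation*}
\int_0^\infty \frac{e^{-ax}}{1+x}\,dx=e^{a}E_1(a),\qquad a>0,
\end{equation*}
which follows from the substitution $t=a(1+x)$ and the definition of $E_1$. Setting $a=\Xi_{\mathbf{s}}/\bar{\gamma}$ and dividing by $\ln 2$ to convert the natural logarithm to base 2 yields the first line of \eqref{eq:rate_closed_form}. Subtracting the penalty term completes the expression.

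The main point needing care is justifying the integration by parts and the term-by-term integration. Each exponential term is integrable and the alternating sum is finite, so no convergence issue arises. One could instead integrate the PDF \eqref{eq:pdf_exact} directly against $\ln(1+x)$, since $\int_0^\infty a e^{-ax}\ln(1+x)\,dx=e^{a}E_1(a)$ by the same parts argument; this gives an equivalent route.

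It should also be stated explicitly that the result is an approximation on two counts. It relies on the high-SNR dispersion approximation $V(\gamma)\approx(\log_2 e)^2$ that produced \eqref{eq:rate_decoupled}, and on the effective eigenmode truncation to $M$ modes. No further analytic step is required.
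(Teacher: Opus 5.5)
Your proposal is correct and is essentially the paper's argument: you split off the deterministic penalty, expand the analytical SNR distribution of Proposition~\ref{prop:exact_pdf} into $2^M-1$ exponential terms, and reduce each term to $e^{a}E_1(a)$ with $a=\Xi_{\mathbf{s}}/\bar{\gamma}>0$. The only difference is bookkeeping: you integrate by parts once against the complementary CDF and then need only the substitution $t=a(1+x)$, whereas the paper integrates $\log_2(1+x)$ against each exponential PDF component via $t=1+x$ and leaves the per-term integration by parts implicit in \eqref{eq:int_steps}, which is exactly the alternative route you mention.
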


\begin{IEEEproof}
The average achievable rate is decomposed into the ergodic capacity component, $\mathcal{R}_{\mathrm{Shannon}}$, and the finite blocklength penalty term. The capacity term is defined as
\begin{equation}
\mathcal{R}_{\mathrm{Shannon}} = \int_{0}^{\infty} \log_2(1+x) f_{\tilde{\gamma}_{\mathrm{FAS}}}(x) dx.
\end{equation}
Substituting the analytical PDF approximation from Proposition~\ref{prop:exact_pdf} into the integral yields
\begin{equation}
\mathcal{R}_{\mathrm{Shannon}} = \int_{0}^{\infty} \log_2(1+x) \left[ \sum_{k=1}^{M} (-1)^{k+1} \sum_{\mathbf{s} \in \mathcal{S}_k} \eta_{\mathbf{s}} e^{-\eta_{\mathbf{s}} x} \right] dx,
\end{equation}
where $\eta_{\mathbf{s}} \triangleq \Xi_{\mathbf{s}} / \bar{\gamma}$. Leveraging the linearity of the integration operator, we interchange the order of summation and integration. The problem reduces to evaluating the component integral $\mathcal{I}(\eta)$ for a generic decay rate $\eta$, defined as
\begin{equation} \label{eq:component_integral}
\mathcal{I}(\eta) \triangleq \int_{0}^{\infty} \log_2(1+x) \cdot \eta e^{-\eta x} dx.
\end{equation}
Letting $t = 1+x$, we apply the change of variables. The integral is then evaluated as
\begin{align} \label{eq:int_steps}
\mathcal{I}(\eta) &\hspace{.5mm}= \frac{\eta e^{\eta}}{\ln 2} \int_{1}^{\infty} \ln(t) e^{-\eta t} dt \nonumber \\
&{=} \frac{e^{\eta}}{\ln 2} \int_{\eta}^{\infty} \frac{e^{-u}}{u} du = \frac{e^{\eta}}{\ln 2} E_1(\eta),
\end{align}

Finally, substituting $\eta = \eta_{\mathbf{s}}$ back into the summation and subtracting the penalty term determined by the effective blocklength $L(N)$ yields the result in \eqref{eq:rate_closed_form}.
\end{IEEEproof}

\vspace{-2mm}
\subsection{Asymptotic Analysis and Insights}
To evaluate the high-SNR behavior where $\bar{\gamma} \to \infty$, we analyze the terms for small arguments $x = \Xi_{\mathbf{s}} / \bar{\gamma} \to 0$. According to the standard series expansion of the exponential integral function \cite[Eq. 5.1.11]{abramowitz1964handbook}, we have $E_1(x) = -\gamma_{\mathrm{EM}} - \ln x - \sum_{n=1}^\infty \frac{(-x)^n}{n \cdot n!}$. For $x \to 0$, the higher-order terms vanish, yielding the leading-order approximation $E_1(x) \approx -\gamma_{\mathrm{EM}} - \ln x$. In the same regime, $e^x=1+\mathcal{O}(x)$. Hence, the leading high-SNR approximation is $e^x E_1(x) \approx \ln(1/x) - \gamma_{\mathrm{EM}}$, where $\gamma_{\mathrm{EM}} \approx 0.5772$ is the Euler-Mascheroni constant. Substituting this expansion into the closed-form rate approximation, we derive the following corollary.

\begin{corollary} \label{cor:high_snr_rate}
At high SNR, the achievable rate scales as
\begin{equation}\label{eq29}
\bar{R}^{\infty}(N) \approx \log_2(\bar{\gamma}) + \mathcal{S}_{\mathrm{div}}(N) - \frac{Q^{-1}(\epsilon) \log_2 e}{\sqrt{L_{\mathrm{tot}} - N\tau}},
\end{equation}
where $\mathcal{S}_{\mathrm{div}}(N) =  - \frac{\gamma_{\mathrm{EM}}}{\ln 2} - \frac{1}{\ln 2} \sum_{k=1}^{M} (-1)^{k+1} \sum_{\mathbf{s} \in \mathcal{S}_k} \ln(\Xi_{\mathbf{s}})$ represents the spatial diversity gain offset.
\end{corollary}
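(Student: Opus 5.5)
The plan is to start from the closed-form expression in Theorem~\ref{thm:achievable_rate} and replace each exponential-integral term by its small-argument expansion. The finite-blocklength penalty term $Q^{-1}(\epsilon)\log_2 e/\sqrt{L_{\mathrm{tot}}-N\tau}$ does not depend on $\bar{\gamma}$, so it carries over unchanged. All the work is in the Shannon part.

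\emph{Step 1 (per-term expansion).} For each subset $\mathbf{s}\in\mathcal{S}_k$, set $x_{\mathbf{s}}=\Xi_{\mathbf{s}}/\bar{\gamma}$. The quantities $\Xi_{\mathbf{s}}$ depend only on the eigenvalues of $\mathbf{J}$, which are fixed for a given $N$ and $W$. Hence $x_{\mathbf{s}}\to0$ as $\bar{\gamma}\to\infty$. Using the series \cite[Eq. 5.1.11]{abramowitz1964handbook} quoted before the corollary, together with $e^{x}=1+\mathcal{O}(x)$, I would write
\begin{equation*}
e^{x_{\mathbf{s}}}E_1(x_{\mathbf{s}})=\ln\bar{\gamma}-\ln\Xi_{\mathbf{s}}-\gamma_{\mathrm{EM}}+\mathcal{O}\!\left(x_{\mathbf{s}}\ln(1/x_{\mathbf{s}})\right).
\end{equation*}

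\emph{Step 2 (collapse the signed sum).} Substituting Step 1 into the first term of \eqref{eq:rate_closed_form} splits it into three pieces: one multiplying $\ln\bar{\gamma}$, one multiplying $-\gamma_{\mathrm{EM}}$, and the subset-dependent piece $-\ln\Xi_{\mathbf{s}}$. The first two pieces share the combinatorial weight
\begin{equation*}
\sum_{k=1}^{M}(-1)^{k+1}|\mathcal{S}_k|=\sum_{k=1}^{M}(-1)^{k+1}\binom{M}{k}=1-(1-1)^M=1.
\end{equation*}
The same identity is what makes the CDF in \eqref{eq:cdf_expanded} vanish at $x=0$ and the PDF \eqref{eq:pdf_exact} integrate to one. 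After dividing by $\ln2$, the first piece becomes $\log_2\bar{\gamma}$ and the second becomes $-\gamma_{\mathrm{EM}}/\ln2$. The third piece becomes $-\frac{1}{\ln2}\sum_{k}(-1)^{k+1}\sum_{\mathbf{s}}\ln\Xi_{\mathbf{s}}$. The last two pieces together are exactly $\mathcal{S}_{\mathrm{div}}(N)$. Adding back the unchanged penalty term gives \eqref{eq29}.

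\emph{Step 3 (remainder control).} The step needing the most care is justifying that the discarded terms vanish. A naive concern is the alternating signs: the individual terms grow like $\ln\bar{\gamma}$, and one might worry the remainders could accumulate. However, there are only $2^M-1$ terms, with $M$ fixed by \eqref{eq:effective_rank} for a given $N$, and each $\Xi_{\mathbf{s}}$ is finite. Therefore the total remainder is a finite signed sum of $\mathcal{O}(\bar{\gamma}^{-1}\ln\bar{\gamma})$ terms and tends to zero. The approximation in \eqref{eq29} is thus exact up to an $o(1)$ additive error as $\bar{\gamma}\to\infty$. This holds uniformly for each fixed $N$ with $L(N)\ge L_{\min}$.
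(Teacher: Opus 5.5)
Your proposal is correct and follows the same route as the paper: substitute the small-argument expansion $e^{x}E_1(x)\approx\ln(1/x)-\gamma_{\mathrm{EM}}$ into each term of Theorem~\ref{thm:achievable_rate}, and carry the $\bar{\gamma}$-independent penalty term over unchanged. You also make explicit two points the paper uses only implicitly: the identity $\sum_{k=1}^{M}(-1)^{k+1}\binom{M}{k}=1$, which gives $\log_2\bar{\gamma}$ and $-\gamma_{\mathrm{EM}}/\ln 2$ unit coefficients, and the finite-sum $\mathcal{O}(\bar{\gamma}^{-1}\ln\bar{\gamma})$ remainder bound.
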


\begin{figure}
\centering
\includegraphics[width=.7\columnwidth]{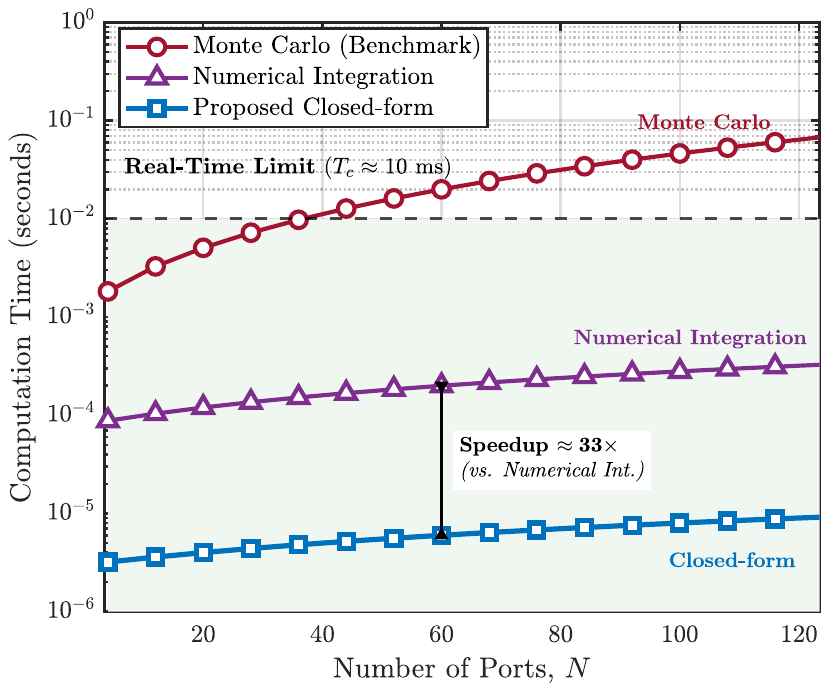}\\
\caption{Computation time versus the number of ports $N$ for different evaluation methods under the effective-rank criterion $\eta=0.99$.}\label{Fig8}
\vspace{-3mm}
\end{figure}

\begin{remark}[{\bf Finite-Blocklength Trade-off and High-SNR Validity}]
Theorem~\ref{thm:achievable_rate} and Corollary~\ref{cor:high_snr_rate} show that the achievable rate is governed by two competing effects: the spatial-diversity offset $\mathcal{S}_{\mathrm{div}}(N)$ improves the Shannon term, whereas the finite-blocklength penalty increases as the effective blocklength $L(N)=L_{\mathrm{tot}}-N\tau$ decreases. Hence, the throughput-optimal port number must be selected over the feasible integer set rather than inferred from monotonicity. The high-SNR approximations used in Corollary~\ref{cor:high_snr_rate} are intended for the moderate-to-high SNR regime relevant to HRLLC operation. In the considered numerical settings, achieving a target BLER around $10^{-5}$ typically requires an average transmit SNR of approximately $10$--$20$ dB, depending on $N$, $W$, and $\tau$. These asymptotic expressions are used only for structural insight, while the numerical results are evaluated using the finite-blocklength performance expressions.
\end{remark}

\vspace{-2mm}
\subsection{Computational Complexity Analysis}
Fig.~\ref{Fig8} compares the computation times of Monte Carlo simulation, direct numerical integration, and the proposed closed-form analytical evaluation. The tests were conducted on the same computer with an \texttt{Intel Core i7-12700H CPU}, \texttt{16 GB RAM}, running \texttt{MATLAB R2024b} on \texttt{Windows 11}. The Monte Carlo method used $S=10^6$ channel realizations, and the numerical integration used adaptive quadrature with absolute and relative tolerances of $10^{-10}$ and $10^{-8}$, respectively. The effective rank $M$ was determined by \eqref{eq:effective_rank} with $\eta=0.99$. Under this setting, the proposed method achieves up to approximately $33\times$ speedup over numerical integration. This speedup should be interpreted under the specified hardware, software, sample size, integration tolerance, and effective-rank criterion. It depends on $M$, since the subset expansion in \eqref{eq:pdf_exact} contains $2^M-1$ terms; thus, the gain decreases as $M$ increases.

\vspace{-2mm}
\subsection{Operational Region Analysis: FAS vs. FPA}
To provide a rate-oriented benchmark for the proposed port-dimensioning strategy, we characterize the operational region where the FAS provides an achievable-rate gain over a conventional FPA. The FPA is regarded as a degenerate FAS case with a single port $(N=1)$ and zero switching overhead $(\tau=0)$.

Using the high-SNR approximation in \eqref{eq29}, the average achievable rate of an $N$-port FAS is expressed as
\begin{equation} \label{eq:R_FAS_inf}
\bar{R}_{\mathrm{FAS}}^{\infty}(N)
\approx
\log_2\bar{\gamma}
+
\mathcal{S}_{\mathrm{div}}(N)
-
\frac{\chi}{\sqrt{L_{\mathrm{tot}}-N\tau}},
\end{equation}
where $\chi=Q^{-1}(\epsilon)\log_2 e$. Correspondingly, the rate of the FPA is
\begin{equation} \label{eq:R_FPA_inf}
\bar{R}_{\mathrm{FPA}}^{\infty}
\approx
\log_2\bar{\gamma}
-
\frac{\chi}{\sqrt{L_{\mathrm{tot}}}},
\end{equation}
since $\mathcal{S}_{\mathrm{div}}(1)=0$. Therefore, the rate-oriented comparison is written directly in terms of $\mathcal{S}_{\mathrm{div}}(N)$ without introducing the auxiliary notation $\Delta\mathcal{S}(N)$.

\begin{proposition} \label{prop:rate_operational_region}
Under the high-SNR rate approximation, the $N$-port FAS achieves a higher or equivalent average achievable rate compared with the FPA, i.e., $\bar{R}_{\mathrm{FAS}}^{\infty}(N)\geq\bar{R}_{\mathrm{FPA}}^{\infty}$, when the unit switching delay satisfies
\begin{equation} \label{eq:tau_rate_inequality}
\tau
\leq
\frac{L_{\mathrm{tot}}}{N}
\left[
1-
\left(
1+
\frac{\mathcal{S}_{\mathrm{div}}(N)\sqrt{L_{\mathrm{tot}}}}
{\chi}
\right)^{-2}
\right]
\triangleq
\tau_{\mathrm{eq}}^{R}.
\end{equation}
\end{proposition}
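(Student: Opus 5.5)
The plan is to compare the two high-SNR rate approximations \eqref{eq:R_FAS_inf} and \eqref{eq:R_FPA_inf} directly and solve the resulting inequality for $\tau$. Since both expressions share the term $\log_2\bar{\gamma}$, the condition $\bar{R}_{\mathrm{FAS}}^{\infty}(N)\geq\bar{R}_{\mathrm{FPA}}^{\infty}$ reduces to
\begin{equation*}
\mathcal{S}_{\mathrm{div}}(N)+\frac{\chi}{\sqrt{L_{\mathrm{tot}}}}\geq\frac{\chi}{\sqrt{L_{\mathrm{tot}}-N\tau}}.
\end{equation*}
I will restrict attention to the operationally meaningful regime where $\epsilon<1/2$, so that $\chi>0$, and where $\mathcal{S}_{\mathrm{div}}(N)\geq 0$, so that the left-hand side is strictly positive. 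I will also assume $L_{\mathrm{tot}}-N\tau>0$, which is guaranteed by $L(N)\geq L_{\min}$.

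With these sign conditions in place, I will multiply through by $\sqrt{L_{\mathrm{tot}}}/\chi>0$. This gives the equivalent form
\begin{equation*}
1+\frac{\mathcal{S}_{\mathrm{div}}(N)\sqrt{L_{\mathrm{tot}}}}{\chi}\geq\sqrt{\frac{L_{\mathrm{tot}}}{L_{\mathrm{tot}}-N\tau}}.
\end{equation*}
Both sides are positive, so squaring and taking reciprocals preserves (respectively reverses) the inequality. The reciprocal step yields
\begin{equation*}
\frac{L_{\mathrm{tot}}-N\tau}{L_{\mathrm{tot}}}\geq\left(1+\frac{\mathcal{S}_{\mathrm{div}}(N)\sqrt{L_{\mathrm{tot}}}}{\chi}\right)^{-2}.
\end{equation*}
Rearranging, $N\tau\leq L_{\mathrm{tot}}\bigl[1-(\cdot)^{-2}\bigr]$, and dividing by $N>0$ gives exactly $\tau\leq\tau_{\mathrm{eq}}^{R}$ in \eqref{eq:tau_rate_inequality}. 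Each step is reversible, so the condition is equivalent to the rate comparison within the approximation.

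The main point requiring care is justifying the sign assumptions behind the squaring and reciprocal steps. I would note that $\mathcal{S}_{\mathrm{div}}(1)=0$ by construction, and that $\mathcal{S}_{\mathrm{div}}(N)$ is the high-SNR offset of $\mathbb{E}[\log_2\max_m\lambda_m|z_m|^2]$ relative to a unit-mean Rayleigh branch; for a nontrivial eigenmode set it is nonnegative. If $\mathcal{S}_{\mathrm{div}}(N)<0$, the left-hand side of the first inequality is smaller than $\chi/\sqrt{L_{\mathrm{tot}}}$, while the right-hand side is at least $\chi/\sqrt{L_{\mathrm{tot}}}$ for every $\tau\geq0$. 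No feasible nonnegative $\tau$ then exists, and the threshold formula correctly returns a negative value. This is consistent with the statement being a sufficient condition.
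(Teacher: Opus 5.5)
Your main chain is the paper's argument: cancel $\log_2\bar{\gamma}$, multiply by $\sqrt{L_{\mathrm{tot}}}/\chi$, set $\Psi=1+\mathcal{S}_{\mathrm{div}}(N)\sqrt{L_{\mathrm{tot}}}/\chi$, square, invert, and solve for $\tau$. It is more careful than the paper on one point. The paper writes the chain as forward implications and squares $\sqrt{L_{\mathrm{tot}}/(L_{\mathrm{tot}}-N\tau)}\le\Psi$ without noting that the converse direction needs $\Psi>0$, and the converse is what ``when $\tau\le\tau_{\mathrm{eq}}^{R}$'' actually asserts.

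\emph{The sign discussion at the end of your proposal contains a false claim.} For $\mathcal{S}_{\mathrm{div}}(N)<0$ you correctly note that no $\tau\ge0$ satisfies the rate inequality. However, $\tau_{\mathrm{eq}}^{R}$ is negative only when $|\Psi|<1$. If $\mathcal{S}_{\mathrm{div}}(N)\le-2\chi/\sqrt{L_{\mathrm{tot}}}$, then $\Psi\le-1$, so $\Psi^{-2}\le1$ and $\tau_{\mathrm{eq}}^{R}\ge0$. The formula then certifies delays for which $\bar{R}^{\infty}_{\mathrm{FAS}}(N)<\bar{R}^{\infty}_{\mathrm{FPA}}$, and the statement fails as a sufficient condition; at $\Psi=0$ it is undefined. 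So $\Psi>0$, i.e.\ $\mathcal{S}_{\mathrm{div}}(N)>-\chi/\sqrt{L_{\mathrm{tot}}}$, has to be an explicit hypothesis. Under it, every step is reversible exactly as you say.

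\emph{Nor can you discharge this by asserting $\mathcal{S}_{\mathrm{div}}(N)\ge0$.} With the definition in Corollary~\ref{cor:high_snr_rate}, $\mathcal{S}_{\mathrm{div}}(1)=-\gamma_{\mathrm{EM}}/\ln2+\log_2\sigma^2$, which is about $-0.83$ for $\sigma^2=1$, not $0$. Your ``offset relative to a unit-mean Rayleigh branch'' is a renormalized quantity, not the stated formula. The paper's own assertion $\mathcal{S}_{\mathrm{div}}(1)=0$ has the same inconsistency with its formula. This regime is reachable: for $\epsilon=10^{-5}$ and $L_{\mathrm{tot}}=500$, $2\chi/\sqrt{L_{\mathrm{tot}}}\approx0.55$. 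So the literal formula at $N=1$ gives $\Psi\approx-2.0$ and $\tau_{\mathrm{eq}}^{R}\approx0.76L_{\mathrm{tot}}>0$. Carry the sign condition as a hypothesis and check it for the $N$ of interest rather than asserting it.
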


\begin{IEEEproof}
The condition for FAS to be superior or equivalent to FPA is
$\bar{R}_{\mathrm{FAS}}^{\infty}(N)-\bar{R}_{\mathrm{FPA}}^{\infty}\geq0$.
By substituting \eqref{eq:R_FAS_inf} and \eqref{eq:R_FPA_inf}, we obtain
\begin{equation} \label{eq:rate_ineq_derivation_2}
\frac{1}{\sqrt{L_{\mathrm{tot}}-N\tau}}
\leq
\frac{\mathcal{S}_{\mathrm{div}}(N)}{\chi}
+
\frac{1}{\sqrt{L_{\mathrm{tot}}}}.
\end{equation}
Multiplying both sides of \eqref{eq:rate_ineq_derivation_2} by $\sqrt{L_{\mathrm{tot}}}$ and defining
\begin{equation}
\Psi
\triangleq
1+
\frac{\mathcal{S}_{\mathrm{div}}(N)\sqrt{L_{\mathrm{tot}}}}{\chi}.
\end{equation}
Then, we have
\begin{equation}
\frac{\sqrt{L_{\mathrm{tot}}}}
{\sqrt{L_{\mathrm{tot}}-N\tau}}
\leq
\Psi.
\end{equation}
Thus, $L_{\mathrm{tot}}/(L_{\mathrm{tot}}-N\tau)\leq\Psi^2$, which implies
\begin{equation}
L_{\mathrm{tot}}-N\tau
\geq
\frac{L_{\mathrm{tot}}}{\Psi^2}.
\end{equation}
Solving the above inequality for $\tau$ and substituting the definition of $\Psi$ yields \eqref{eq:tau_rate_inequality}.
\end{IEEEproof}

\begin{remark} \label{rem:physical_insights}
\textbf{(Physical Insights)} The inequality \eqref{eq:tau_rate_inequality} delineates the feasible hardware-latency region for rate-oriented FAS operation. Since $\mathcal{S}_{\mathrm{div}}(N)$ is upper-bounded by the finite aperture-induced DoF, the allowable switching delay is constrained by the physical terminal size. Moreover, stringent HRLLC reliability requirements increase $Q^{-1}(\epsilon)$ and thus enlarge $\chi$, narrowing the feasible delay region.
\end{remark}

\vspace{-2mm}
\section{Optimal and Theoretical Properties}\label{sec:optimization}

\subsection{Problem Formulation}
We formulate the port dimensioning problem from three complementary perspectives: reliability maximization for safety-critical applications, throughput maximization for data-intensive services, and EE maximization for sustainable 6G operations. Since the number of ports $N$ is discrete and bounded, these are integer optimization problems.

\subsubsection{Reliability-Oriented Design (P1)}
For mission-critical scenarios where minimizing packet loss is paramount, the objective is to minimize the average BLER. That is,
\begin{subequations} \label{eq:opt_bler}
\begin{align}
\mathbf{P1}:  \min_{N \in \mathbb{Z}^{+}} &~ \bar{\epsilon}(N) \\
\text{s.t.} &~ 1 \le N \le \left\lfloor \frac{L_{\mathrm{tot}} - L_{\min}}{\tau} \right\rfloor, \label{eq:const_bler}
\end{align}
\end{subequations}
where \eqref{eq:const_bler} ensures that the effective blocklength $L(N)=L_{\mathrm{tot}}-N\tau$ satisfies the validity condition of the normal approximation. Unless otherwise specified, we set $L_{\min}=100$, so that all feasible port numbers obey $L(N)\geq100$.

\subsubsection{Throughput-Oriented Design (P2)}
When requiring high spectral efficiency under latency constraints, the goal is to maximize the average achievable rate,\footnote{Strictly speaking, the effective throughput is defined as $T = \bar{R}(1-\bar{\epsilon})$. However, in the context of HRLLC, the target error probability is extremely low (e.g., $\bar{\epsilon} \le 10^{-5}$), implying $1-\bar{\epsilon} \approx 1$. Therefore, the average achievable rate $\bar{R}$ serves as a precise proxy for the system throughput.} expressed as
\begin{subequations} \label{eq:opt_rate}
\begin{align}
\mathbf{P2}: \max_{N \in \mathbb{Z}^{+}} &~\bar{R}(N) \\
\text{s.t.} &~ 1 \le N \le \left\lfloor \frac{L_{\mathrm{tot}} - L_{\min}}{\tau} \right\rfloor.
\end{align}
\end{subequations}

\subsubsection{EE-Oriented Design (P3)}
Besides reliability and system throughput, EE stands as a pivotal metric for sustainable 6G networks. The deployment of FAS introduces a unique energy trade-off: while increasing $N$ enhances the achievable rate (reducing the energy-per-bit), it concurrently prolongs the port scanning duration, thereby increasing the energy consumption associated with channel estimation and RF switching.

To formulate the EE maximization problem, we first model the average power consumption based on the frame structure defined in Fig. \ref{frame}. The total energy consumed per transmission block, denoted by $E_{\mathrm{tot}}(N)$, comprises the energy dissipated during the port-selection phase and the data-transmission phase
\begin{equation} \label{eq:energy_model}
	E_{\mathrm{tot}}(N) = \underbrace{(N\tau) P_{\mathrm{scan}}}_{\text{Selection Energy}} + \underbrace{(L_{\mathrm{tot}} - N\tau) P_{\mathrm{active}}}_{\text{Transmission Energy}},
\end{equation}
where $P_{\mathrm{scan}}$ represents the circuit power consumption during the port-selection phase, including port probing/scanning, CSI acquisition, and RF switching, and $P_{\mathrm{active}}$ denotes the total power consumption during data transmission (including transmit power and RF chain processing).
The average power consumption is then given by $P_{\mathrm{avg}}(N) = E_{\mathrm{tot}}(N) / L_{\mathrm{tot}}$.

Consequently, the EE metric, defined as the ratio of the average system throughput to the average power consumption (in bits/Joule), is expressed as
\begin{equation} \label{eq:ee_metric}
	\eta_{\mathrm{EE}}(N) = \frac{\overline{R}(N)}{P_{\mathrm{avg}}(N)} = \frac{L_{\mathrm{tot}} \cdot \overline{R}(N)}{N\tau P_{\mathrm{scan}} + (L_{\mathrm{tot}} - N\tau) P_{\mathrm{active}}}.
\end{equation}

The EE maximization problem is thus formulated as
\begin{subequations}
\begin{align}
\textbf{P3:} \max_{N \in \mathbb{Z}^{+}} &~\eta_{\mathrm{EE}}(N) \\
		\text{s.t.} &~ 1 \le N \le \left\lfloor \frac{L_{\mathrm{tot}} - L_{\min}}{\tau} \right\rfloor, \label{eq:c1_ee} \\
		&~ \overline{\epsilon}(N) \le \epsilon_{\mathrm{th}}, \label{eq:c2_ee}
\end{align}
\end{subequations}
where \eqref{eq:c2_ee} imposes a reliability constraint ensuring that the pursuit of EE does not compromise the fundamental HRLLC requirement, with $\epsilon_{\mathrm{th}}$ being the maximum tolerable BLER.

\begin{remark}
\textbf{(EE Trend)} The EE objective function $\eta_{\mathrm{EE}}(N)$ typically exhibits a non-monotonic trend with respect to $N$. The numerator $\bar{R}(N)$ benefits from spatial diversity but is limited by DoF saturation and finite-blocklength penalty, while the denominator $P_{\mathrm{avg}}(N)$ increases with the port-scanning overhead. This interaction explains the observed EE trade-off. The EE-optimal port number is obtained by finite search over the feasible integer set rather than by assuming a unique optimizer from the continuous relaxation.
\end{remark}

\vspace{-2mm}

\subsection{Unimodality Analysis} \label{subsec}
The derived analytical expressions provide a quantitative measure of system performance. To gain design insights, we analyze the structural properties of the reliability, throughput, and EE metrics by relaxing the discrete port number $N$ to a continuous variable $x\in[1,N_{\max}]$. It should be emphasized that this continuous-relaxation analysis is used to explain the non-monotonic behavior of the objective functions and to guide port-dimensioning design. The actual optimization problems $\mathbf{P1}$-$\mathbf{P3}$ are still solved over the finite integer feasible set. Therefore, the discrete global optimum is obtained by direct enumeration rather than by directly rounding the continuous optimizer.

\subsubsection{High-SNR Reliability Trend}
We first investigate the reliability-oriented metric. Let
$\mathcal{L}(x)=\ln\bar{\epsilon}(x)$ denote the continuous relaxation of the log-BLER function, where $x\in[1,N_{\max}]$ is the relaxed port number.
\begin{proposition} \label{prop:convexity_bler}
Under the high-SNR asymptotic approximation in Corollary~\ref{cor:asymptotic}, the relaxed log-BLER admits the following decomposition:
\begin{equation}
\mathcal{L}^{\infty}(x)
\approx
\ln \mathcal{K}(M(x),L(x))
-
M(x)\ln\left(G_a(x)\bar{\gamma}\right),
\label{eq:bler_log_decomp}
\end{equation}
where $L(x)=L_{\mathrm{tot}}-x\tau$. This decomposition provides an asymptotic explanation for the U-shaped reliability behavior with respect to the port number.
\end{proposition}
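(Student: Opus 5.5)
The decomposition follows by taking the logarithm of the asymptotic BLER in Corollary~\ref{cor:asymptotic}, with every $N$-dependent quantity viewed as a function of the relaxed variable $x$. Concretely, I would replace $N$ by $x\in[1,N_{\max}]$ in $L(N)=L_{\mathrm{tot}}-N\tau$, in the Jakes covariance \eqref{eq:1d_jakes}, and hence in its eigenvalues $\{\lambda_n\}$. The effective rank $M(x)$ from \eqref{eq:effective_rank} and the array gain $G_a(x)=(\prod_{n=1}^{M(x)}\lambda_n)^{1/M(x)}$ then become functions of $x$; $M(x)$ is piecewise constant. Since $\bar{\epsilon}^{\infty}(x)\approx\mathcal{K}(M(x),L(x))\,(G_a(x)\bar{\gamma})^{-M(x)}$ is a product of positive factors, taking $\ln$ gives $\ln\mathcal{K}(M(x),L(x))-M(x)\ln(G_a(x)\bar{\gamma})$. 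This is \eqref{eq:bler_log_decomp}, with the approximation sign inherited from the high-SNR step $1-e^{-y}\approx y$ used to derive \eqref{eq:cdf_asymp}.

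For the decomposition to be meaningful, I would first check positivity and finiteness of $\mathcal{K}(M,L)$. Writing $\frac{d}{dx}(x^M)=Mx^{M-1}$, the integrand is positive. It is also integrable because the $Q$-function term decays super-exponentially once $C(x)$ exceeds $D/L$. Hence $\ln\mathcal{K}$ is well defined for every feasible $L(x)\geq L_{\min}$. Positivity of $G_a$ follows from positive semidefiniteness of $\mathbf{J}$ together with the fact that the retained dominant eigenvalues are strictly positive.

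To support the U-shape interpretation, I would then study the two terms separately. \textbf{First}, by the local sensitivity model \eqref{eq:local_sensitivity_k}, $\partial\ln\mathcal{K}/\partial x\approx k\tau/L(x)>0$, and this term grows without bound as $L(x)\to L_{\min}$. \textbf{Second}, the diversity term $-M(x)\ln(G_a\bar{\gamma})$ is nonincreasing in $x$ at high SNR: $M(x)$ grows with $x$, but it saturates at an aperture-limited value because of correlation, as noted in the remark on eigenvalues and correlation. At small $x$ the diversity term therefore dominates and $\mathcal{L}^{\infty}$ decreases. Once $M$ saturates, the monotonically increasing penalty term dominates, producing the rising branch.

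The main obstacle is that $M(x)$ is integer-valued and defined through a threshold criterion, so the relaxed function is not differentiable and not genuinely convex. I would therefore state the U-shaped behavior only as an asymptotic, piecewise explanation. On each interval where $M$ is constant, the behavior is governed by the smooth penalty term and by the variation of $G_a$; across intervals, $\mathcal{L}^{\infty}$ drops at rank increments. I would not claim strict unimodality. This is consistent with the paper solving \textbf{P1} by finite enumeration rather than by relying on the relaxation.
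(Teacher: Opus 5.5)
Your proof is correct and follows the paper's own argument: take $\ln$ of the Corollary~\ref{cor:asymptotic} expression $\mathcal{K}(M(x),L(x))\,(G_a(x)\bar{\gamma})^{-M(x)}$, then interpret the two terms only heuristically, as a finite-blocklength penalty that grows as $L(x)=L_{\mathrm{tot}}-x\tau$ shrinks versus an aperture-limited diversity gain that saturates, without claiming convexity or unimodality. Two side remarks are slightly off, though neither affects the decomposition: (i) since $L(x)\geq L_{\min}=100$, the slope $k\tau/L(x)$ is bounded by $k\tau/L_{\min}$ and does not grow without bound, and monotonicity of $\ln\mathcal{K}$ in $x$ at fixed $M$ follows directly because the $Q$-argument $\sqrt{L}\,(C(x)-D/L)/\sqrt{V(x)}$ increases in $L$, so you need not extrapolate the local model beyond $L_{\mathrm{tot}}$; (ii) the $Q$-term decays in $x$ only like $\exp(-\tfrac{L}{2}\ln^2(1+x))$, i.e., super-polynomially rather than super-exponentially, which still suffices for integrability against $Mx^{M-1}$.
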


\begin{IEEEproof}
From Corollary~\ref{cor:asymptotic}, the high-SNR average BLER is approximated as
\begin{equation}
\bar{\epsilon}^{\infty}(x)
\approx
\mathcal{K}(M(x),L(x))
\left(G_a(x)\bar{\gamma}\right)^{-M(x)}.
\end{equation}
Taking the logarithm of both sides yields
\begin{equation}
\ln\bar{\epsilon}^{\infty}(x)
\approx
\ln \mathcal{K}(M(x),L(x))-
M(x)\ln\left(G_a(x)\bar{\gamma}\right),
\end{equation}
which gives \eqref{eq:bler_log_decomp}. The first term represents the finite-blocklength coding penalty induced by the reduced effective blocklength $L(x)$, while the second term represents the reliability gain brought by the effective spatial diversity. As $x$ increases, the diversity gain gradually saturates due to the finite spatial DoF of the compact aperture, whereas the coding penalty increases because $L(x)=L_{\mathrm{tot}}-x\tau$ decreases linearly. This explains the non-monotonic reliability trend observed in the numerical results.

It should be emphasized that \eqref{eq:bler_log_decomp} is derived from the high-SNR asymptotic approximation and is not used as a finite-SNR convexity proof. The finite-SNR discrete optimum is obtained by direct enumeration over the feasible integer set in Algorithm~\ref{alg:optimization}.
\end{IEEEproof}

\subsubsection{High-SNR Throughput Trend}\label{subsec:high_snr_throughput_trend}
We next investigate the throughput-oriented metric. Corollary~\ref{cor:high_snr_rate} shows that, in the high-SNR regime, the average achievable rate can be decomposed into an SNR-dependent baseline term, a spatial-diversity offset, and a finite-blocklength penalty. By relaxing the integer port number $N$ to a continuous variable $x\in[1,N_{\max}]$, this trend can be written as
\begin{equation}
\bar{R}^{\infty}(x)
\approx
\log_2(\bar{\gamma})
+
\mathcal{S}_{\mathrm{div}}(x)
-
\frac{Q^{-1}(\epsilon)\log_2 e}
{\sqrt{L_{\mathrm{tot}}-x\tau}} .
\label{eq:rate_high_snr_decomp}
\end{equation}
The first term is independent of $x$, the second term represents the spatial-diversity offset induced by the effective spatial DoF, and the last term represents the finite-blocklength penalty caused by the reduced effective blocklength. As $x$ increases, the diversity-related gain gradually saturates because the compact aperture provides only a finite number of effective spatial DoF. Meanwhile, the finite-blocklength penalty increases because $L_{\mathrm{tot}}-x\tau$ decreases with $x$. The competition between these two effects explains the non-monotonic throughput trend observed in the numerical results. It should be emphasized that \eqref{eq:rate_high_snr_decomp} is derived from the high-SNR asymptotic approximation and is not used as a finite-SNR concavity proof. The finite-SNR discrete throughput optimum is obtained by direct enumeration over the feasible integer set in Algorithm~\ref{alg:optimization}.

\begin{remark}
The continuous-relaxation properties above provide useful structural insights into the port-dimensioning problem. They explain why increasing the number of ports first improves performance through spatial diversity but eventually becomes detrimental due to the finite-blocklength penalty induced by switching overhead. Nevertheless, the discrete problems \(\mathbf{P1}\)--\(\mathbf{P3}\) are solved directly over the feasible integer set, and no uniqueness of the discrete optimizer is assumed.
\end{remark}

\subsubsection{EE Trend}
The EE metric $\eta_{\rm EE}(x)=\bar{R}(x)/P_{\rm avg}(x)$ reflects the ratio between the achievable-rate gain and the average power consumption. As $N$ increases, $\bar{R}(N)$ first benefits from spatial diversity but later suffers from DoF saturation and finite-blocklength penalty. Meanwhile, $P_{\mathrm{avg}}(N)$ increases with the port-scanning overhead. This interaction explains the non-monotonic EE behavior observed in the numerical results. The EE-optimal port number is obtained by finite search over the feasible integer set.

\vspace{-2mm}
\subsection{Solution Algorithm and Complexity Analysis}
Let the feasible integer set be
\begin{equation}
\mathcal{N}=\{1,2,\ldots,N_{\max}\},
\qquad
N_{\max}=\left\lfloor\frac{L_{\mathrm{tot}}-L_{\min}}{\tau}\right\rfloor .
\end{equation}
The discrete problems $\mathbf{P1}$--$\mathbf{P3}$ are solved by evaluating the corresponding objective over all $N\in\mathcal{N}$. If multiple integer port numbers give the same optimal objective value, the smallest one is selected to reduce hardware complexity and switching overhead. The resulting finite-search procedure is summarized in Algorithm~\ref{alg:optimization}.

\begin{algorithm}[t]
\caption{Optimal Port Dimensioning Strategy}
\label{alg:optimization}
{\small
\begin{algorithmic}[1]
\Require System parameters $(L_{\mathrm{tot}},\tau,D,\bar{\gamma},W,P_{\mathrm{scan}},P_{\mathrm{active}})$, energy-retention factor $\eta$, and optimization mode
\Ensure Optimal number of ports $N^{\star}$

\State Set $N_{\max}\leftarrow \left\lfloor (L_{\mathrm{tot}}-L_{\min})/\tau \right\rfloor$
\State Set $\mathcal{N}\leftarrow \{1,2,\ldots,N_{\max}\}$

\For{each candidate $N\in\mathcal{N}$}
    \State Set $L(N)\leftarrow L_{\mathrm{tot}}-N\tau$
    \State Construct the Jakes covariance matrix $\mathbf{J}_N$ using \eqref{eq:1d_jakes}
    \State Compute and sort the eigenvalues $\lambda_1\geq\lambda_2\geq\cdots\geq\lambda_N$
    \State Determine $M(N)$ using the criterion in \eqref{eq:effective_rank}
    \State Generate $\Xi_{\mathbf{s}}=\sum_{j\in\mathbf{s}}\lambda_j^{-1}$ for $\mathbf{s}\in\mathcal{S}_k$, $k=1,\ldots,M(N)$

    \If{mode is Reliability}
        \State Set $\mathcal{O}_N\leftarrow\bar{\epsilon}(N)$ using \eqref{eq:closed_form_result}
    \ElsIf{mode is Throughput}
        \State Set $\mathcal{O}_N\leftarrow\bar{R}(N)$ using \eqref{eq:rate_closed_form}
    \Else
        \State Compute $P_{\mathrm{avg}}(N)$ using \eqref{eq:energy_model}
        \State Set $\mathcal{O}_N\leftarrow\bar{R}(N)/P_{\mathrm{avg}}(N)$
    \EndIf
\EndFor

\If{mode is Reliability}
    \State Set $N^{\star}\leftarrow \arg\min\limits_{N\in\mathcal{N}}\mathcal{O}_N$
\Else
    \State Set $N^{\star}\leftarrow \arg\max\limits_{N\in\mathcal{N}}\mathcal{O}_N$
\EndIf

\State \textbf{return} $N^{\star}$
\end{algorithmic}
}
\end{algorithm}

\subsection{Complexity Analysis}
The cost of Algorithm~\ref{alg:optimization} includes covariance construction, eigenvalue decomposition, effective-rank selection, subset-based metric evaluation, and finite search. For each candidate $N$, constructing $\mathbf{J}_N$ requires $\mathcal{O}(N^2)$ operations and eigenvalue decomposition requires $\mathcal{O}(N^3)$ operations. After $M(N)$ is determined by \eqref{eq:effective_rank}, the subset expansion in \eqref{eq:pdf_exact} contains $2^{M(N)}-1$ exponential terms. Hence, the dominant per-candidate complexity is $\mathcal{O}(N^3+2^{M(N)})$, and the total complexity over the feasible set is
\begin{equation}
\mathcal{O}\left(
\sum_{N\in\mathcal{N}}
\left(
N^3+2^{M(N)}
\right)
\right).
\end{equation}
This expression makes explicit that the exponential term depends on the effective spatial rank $M(N)$ rather than directly on the physical number of ports. Therefore, the analytical evaluation is efficient when $M(N)\ll N$, as commonly observed in compact FAS apertures with strong spatial correlation.
\section{Numerical Results and Discussion}\label{sec:numerical_results}
In this section, we validate the analytical framework and examine the impact of port switching delay on FAS-enabled HRLLC. Unless otherwise specified, the default parameters are $L_{\mathrm{tot}}=500$, $D=256$ bits, $W=5$, and $\tau=2$ channel uses per port. The value $\tau=2$ is interpreted as a normalized low-overhead benchmark under the quasi-static ideal-CSI setting, rather than as a standalone pilot duration for high-accuracy channel estimation. In all simulations, we enforce $L(N)=L_{\mathrm{tot}}-N\tau\geq L_{\min}=100$.

\begin{figure}[t]
\centering
\includegraphics[width=.7\columnwidth]{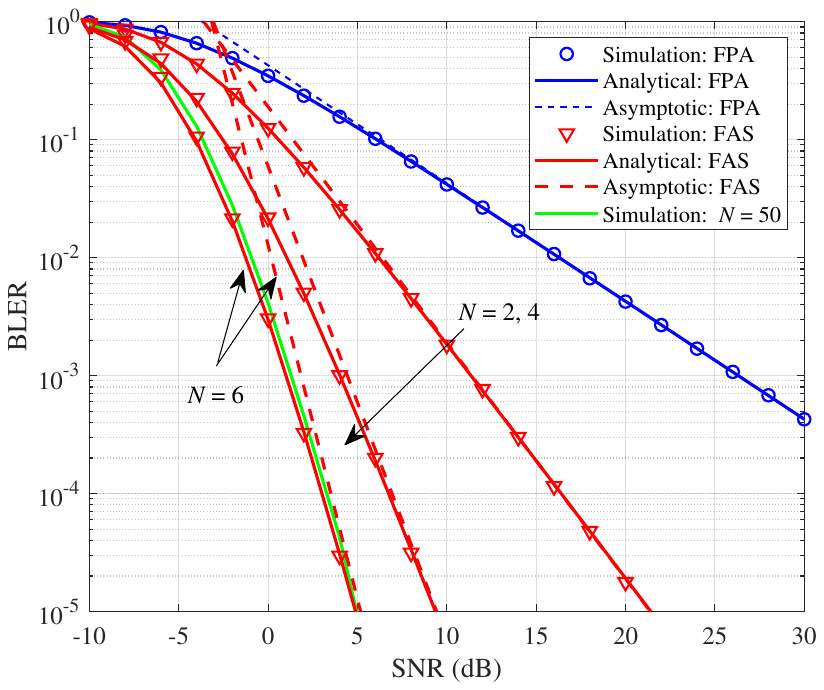}\\
\caption{Average BLER versus transmit SNR for varying number of ports $N$.}\label{Fig0}
\vspace{-2mm}
\end{figure}

\begin{figure}[t]
\centering
\includegraphics[width=.7\columnwidth]{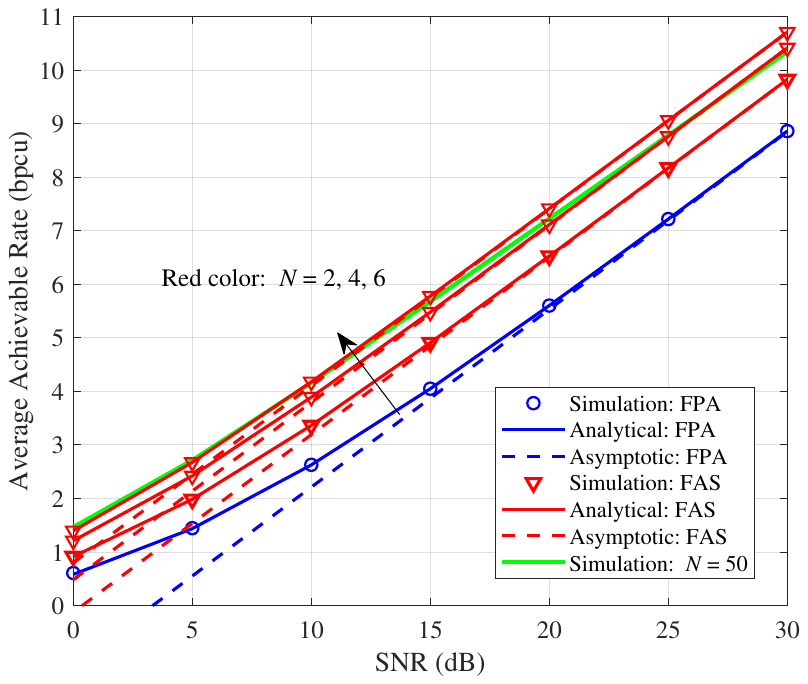}\\
\caption{Average achievable rate versus transmit SNR for varying number of ports $N$.}\label{Fig001}
\vspace{-2mm}
\end{figure}

\begin{figure}[t]
\centering
\includegraphics[width=.7\columnwidth]{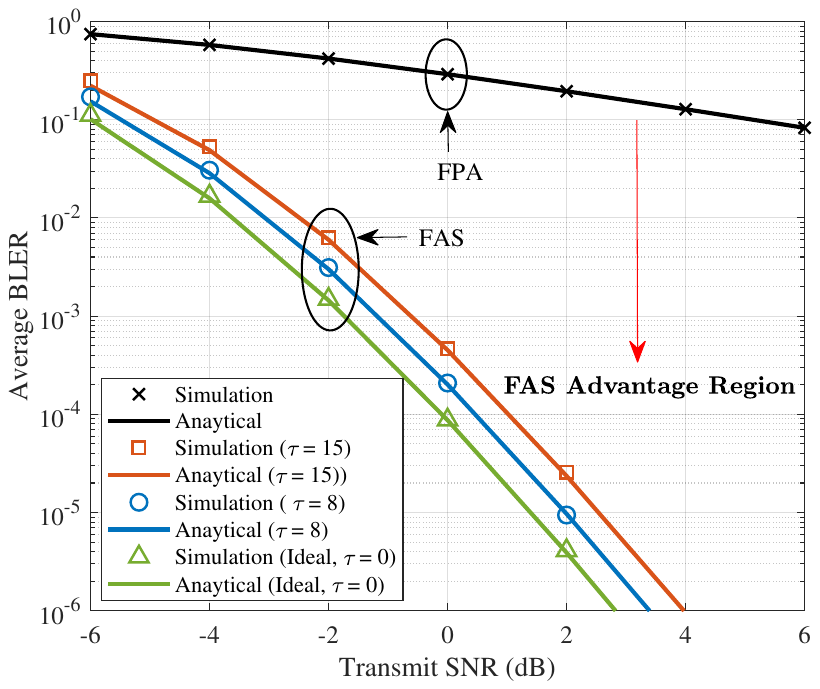}\\
\caption{Average BLER versus transmit SNR $\bar{\gamma}$ for the conventional FPA and the proposed FAS under different switching delay regimes.}\label{Fig3}
\vspace{-2mm}
\end{figure}

Fig.~\ref{Fig0} shows the average BLER versus the transmit SNR $\bar{\gamma}$ for various port dimensions $N \in \{1, 2, 4, 6, 50\}$, explicitly accounting for the practical port switching delay. Here, we consider a restricted normalized aperture of $W = 2$. As observed, the Monte Carlo simulation markers closely match the analytical approximations. In the high-SNR region, the simulation, analytical, and asymptotic curves converge into a single trajectory, validating our derivation of the diversity order $G_d$. Notably, the FAS-enabled system significantly outperforms the FPA by providing a massive reliability buffer through spatial diversity. Initially, reliability improves as $N$ increases from $2$ to $6$ because a higher port density allows the system to harvest higher diversity gain $M$. However, a reversal occurs at $N=50$, where performance is noticeably inferior to $N=6$. While $N=50$ offers higher diversity, the linear depletion of the effective blocklength $L(N) = L_{\mathrm{tot}} - N\tau$ significantly increases the coding penalty $\mathcal{K}$. In the ultra-low latency regime, the penalty from reduced transmission time outweighs the marginal diversity gains from dense ports, which is consistent with the high-SNR reliability trade-off explained in Proposition~\ref{prop:convexity_bler}.

Fig.~\ref{Fig001} compares the average achievable rate against the transmit SNR $\bar{\gamma}$ to evaluate throughput under finite blocklength constraints. {In this scenario, the normalized aperture is set to $W = 2$, and the port switching overhead is reduced to $\tau = 0.5$ channel uses per port to highlight the throughput potential.} We evaluate the system for $N \in \{1, 2, 4, 6, 50\}$. The results show that simulation markers closely match the closed-form analytical approximations. At high SNRs, the simulation, analytical, and asymptotic results converge, validating the spatial diversity gain offset $\mathcal{S}_{\text{div}}(N)$ derived in our asymptotic analysis. Notably, while the achievable rate initially improves as $N$ increases from $2$ to $6$ due to enhanced spatial sampling, a performance reversal occurs at $N=50$. Here, the rate gains saturate or diminish because the effective blocklength $L(N)$ decreases linearly with $N$. This reduction in transmission time amplifies the finite blocklength penalty, eventually outweighing the diversity benefits, which is consistent with the high-SNR throughput trend discussed in Section~\ref{subsec:high_snr_throughput_trend}.

\begin{figure}[t]
\centering
\includegraphics[width=.7\columnwidth]{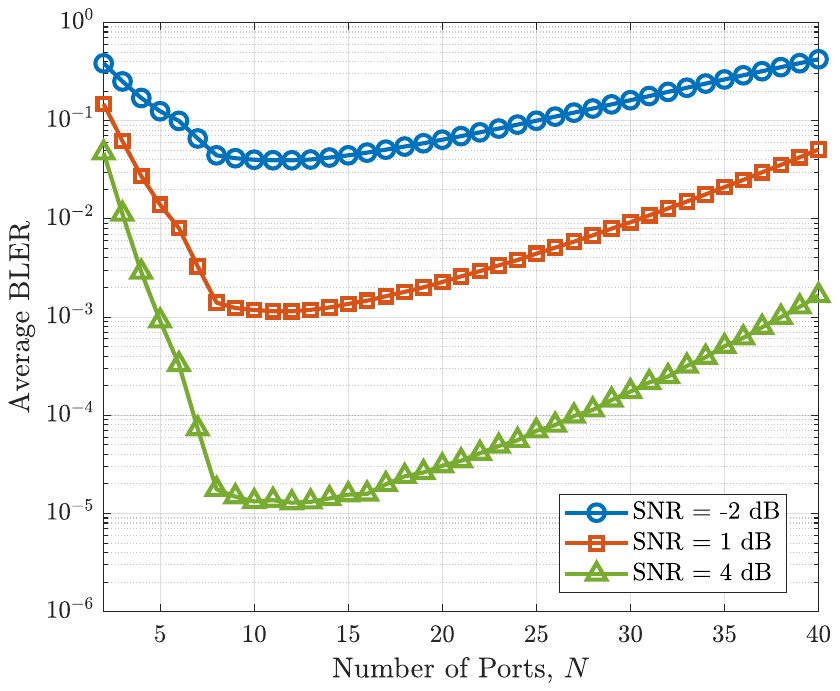}\\
\caption{Average BLER versus the number of ports $N$ for varying transmit SNRs.}\label{Fig1}
\vspace{-2mm}
\end{figure}

\begin{figure}[t]
\centering
\includegraphics[width=.7\columnwidth]{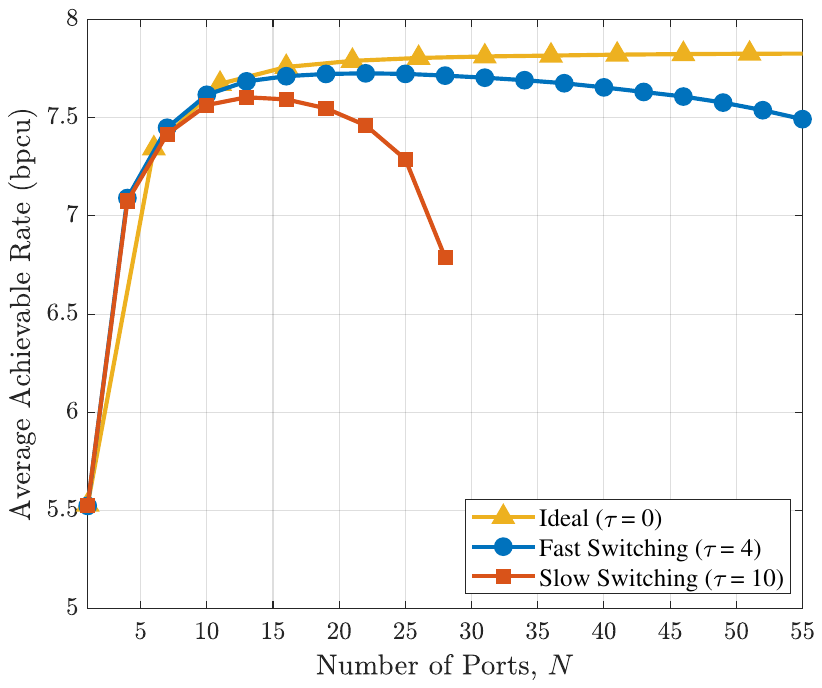}\\
\caption{Average achievable rate versus the number of ports $N$ for different switching delay constraints.}\label{Fig2}
\vspace{-2mm}
\end{figure}

\begin{figure}[t]
\centering
\includegraphics[width=.7\columnwidth]{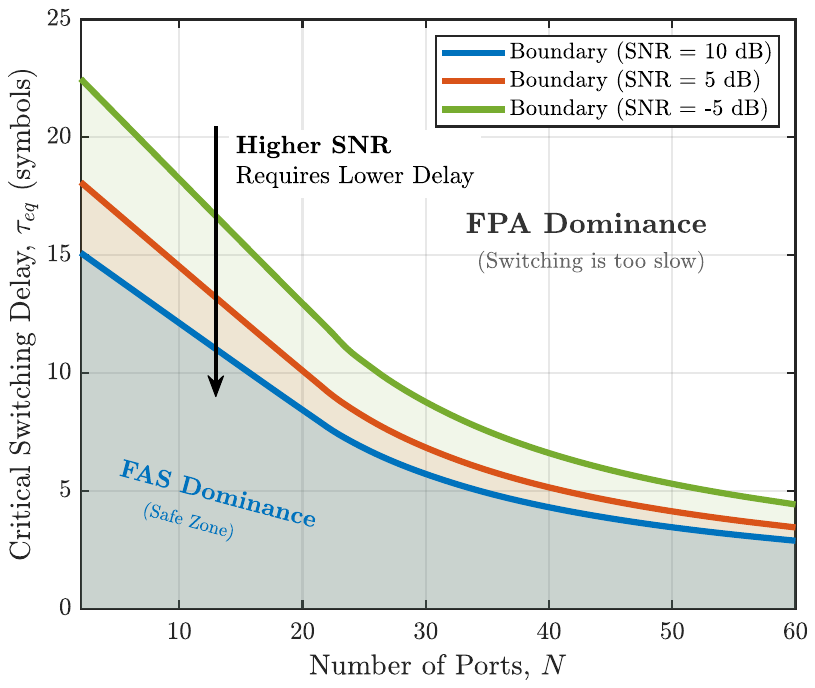}\\
\caption{The critical switching delay threshold $\tau_{\rm eq}$ versus the number of ports $N$ for varying transmit SNRs.}\label{Fig4}
\vspace{-2mm}
\end{figure}

Fig.~\ref{Fig3} compares the average BLER of the proposed FAS against the FPA. For the results in this figure, the latency budget is extended to $L_{\mathrm{tot}} = 600$, and the aperture is set to $W = 3$. We compare a reference FPA ($N=1$) with an 8-port FAS under ideal ($\tau=0$), practical ($\tau=8$), and slow ($\tau=15$) switching cases. The results validate a clear ``FAS Advantage Region" where FAS provides a much steeper error decay slope than FPA due to high-order diversity. Notably, even with a large delay of $\tau = 15$, the FAS still performs better than the FPA in the tested SNR range. However, the gaps between the FAS curves demonstrate that higher switching delays incur a reliability penalty because the effective blocklength $L_{\mathrm{eff}}$ is shortened, reducing coding efficiency.

\begin{figure}[t]
\centering
\includegraphics[width=.7\columnwidth]{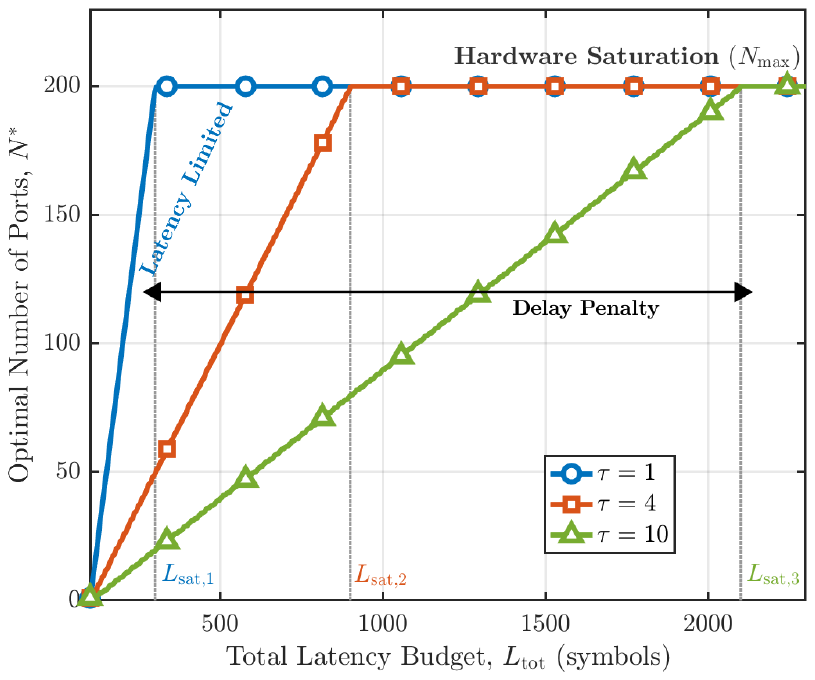}\\
\caption{The optimal number of ports $N^{\star}$ versus the total latency budget $L_{tot}$ for different switching delay values.}\label{Fig5}
\vspace{-2mm}
\end{figure}

Fig.~\ref{Fig1} studies the reliability-latency trade-off under {more stringent constraints: $L_{\mathrm{tot}} = 300$, $D = 200$, $W = 3$, and $\tau = 4$.} As observed, the BLER exhibits a non-monotonic, U-shaped behavior with respect to $N$. Initially, reliability improves as $N$ increases (from $2$ to $\approx 10$) due to enhanced spatial diversity. However, a reversal occurs as $N$ grows further, driven by the linear depletion of the effective blocklength. In this overhead-dominated regime, the reduction in channel uses increases the dispersion penalty, outweighing marginal diversity gains. As such, a finite search over the feasible integer set identifies the best port count for each SNR level.

\begin{figure}[t]
\centering
\includegraphics[width=.7\columnwidth]{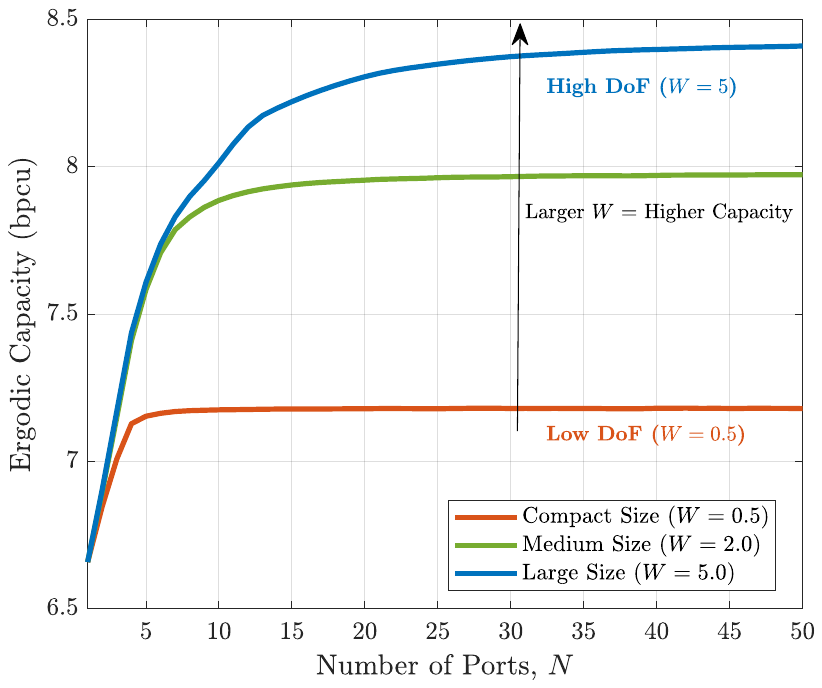}\\
\caption{Ergodic capacity versus the number of ports $N$ for different normalized aperture sizes.}\label{Fig6}
\vspace{-2mm}
\end{figure}

\begin{figure}[t]
\centering
\includegraphics[width=.7\columnwidth]{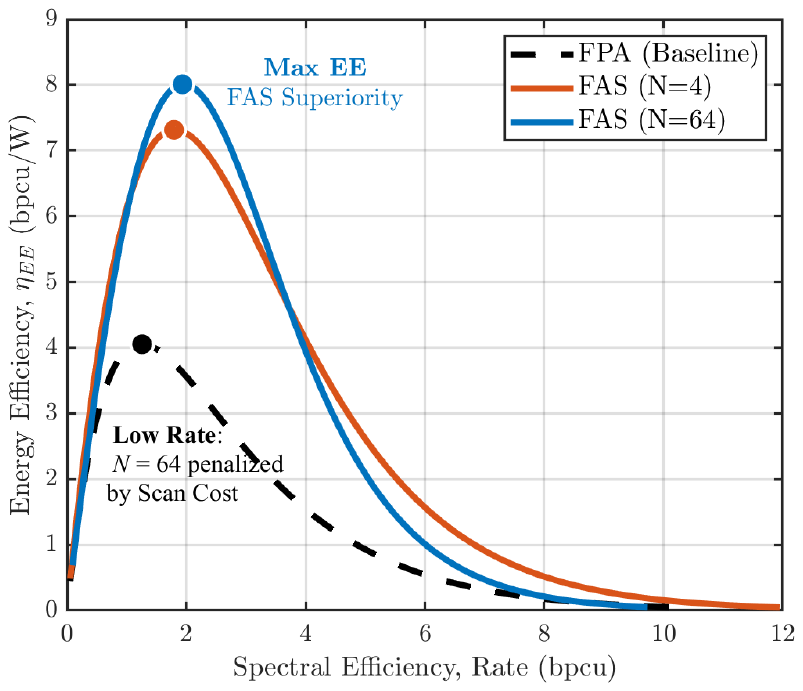}\\
\caption{Energy efficiency versus spectral efficiency trade-off.}\label{Fig7}
\vspace{-2mm}
\end{figure}

\begin{figure}[t]
\centering
\includegraphics[width=.7\columnwidth]{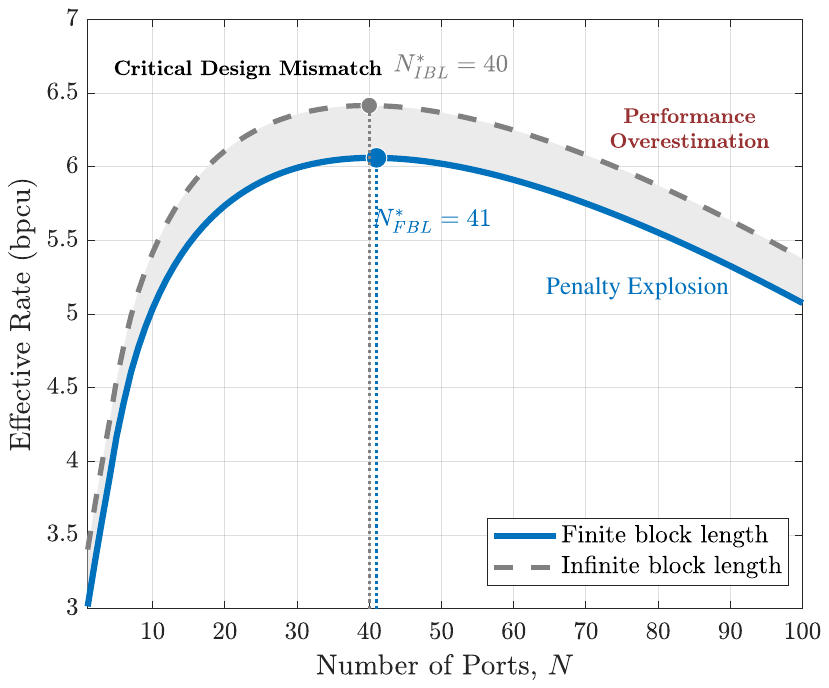}\\
\caption{Effective achievable rate versus the number of ports $N$ under finite-blocklength and infinite-blocklength regimes.}\label{Fig11}
\vspace{-2mm}
\end{figure}

\begin{figure}[t]
\centering
\includegraphics[width=.7\columnwidth]{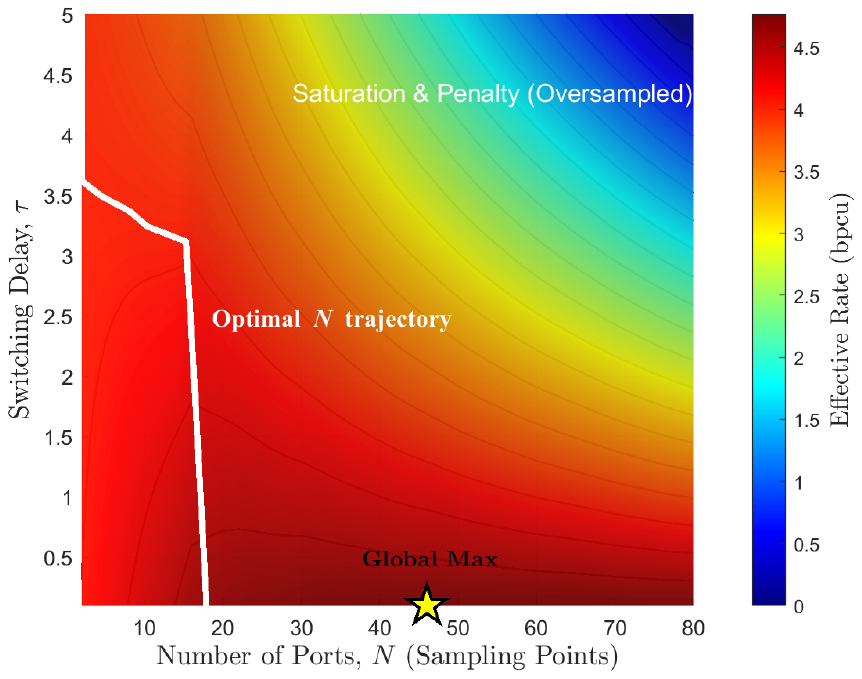}\\
\caption{Joint impact of port dimensioning and switching delay on effective rate.}\label{Fig12}
\vspace{-2mm}
\end{figure}

Fig.~\ref{Fig2} shows the average achievable rate versus $N$ under varying switching delays. Here, we adopt a reduced latency budget $L_{\mathrm{tot}}=300$ and a normalized aperture $W=3$ at an SNR of $20$ dB. Throughput exhibits a distinct non-monotonic behavior for all practical cases with $\tau>0$, which is consistent with the high-SNR throughput trend discussed in Section~\ref{subsec:high_snr_throughput_trend}. While increasing $N$ helps harvest more spatial modes, it also depletes the effective blocklength and amplifies the finite-blocklength penalty. Furthermore, the best port count is highly sensitive to the switching delay; a transition from fast switching ($\tau=4$) to slow switching ($\tau=10$) shifts the selected port count to the left. The ideal case with $\tau=0$ serves as a monotonic upper bound, highlighting the temporal cost of port scanning.

In Fig.~\ref{Fig4}, results are provided to show the critical switching delay limit $\tau_{\rm eq}$. The latency budget is set to $L_{\mathrm{tot}} = 400$, evaluated at SNRs of $-5$, $5$, and $10$ dB. The results define a ``FAS Dominance" zone where the diversity gain outweighs the time loss. The delay limit $\tau_{\rm eq}$ drops as $N$ increases, indicating that systems with more ports are more sensitive to switching costs. Additionally, higher SNRs enlarge the feasible-delay region, implying that stronger diversity gains can tolerate larger switching delays before the finite-blocklength penalty dominates.

Fig.~\ref{Fig5} plots the optimal number of ports $N^{\star}$ as a function of $L_{\mathrm{tot}}$ for varying $\tau$. The simulation evaluates Algorithm \ref{alg:optimization} under a target BLER of $10^{-5}$ at SNR $= 20$ dB. As observed, $N^{\star}$ initially increases with $L_{\mathrm{tot}}$ but then reaches a plateau, indicating a hardware saturation limit. The growth rate of $N^{\star}$ is highly sensitive to switching speed; with fast switching ($\tau=1$), $N^{\star}$ grows rapidly and saturates earlier than in the slow switching case ($\tau=10$). This confirms that Algorithm \ref{alg:optimization} successfully adapts port dimensioning to available temporal resources, balancing the diversity-multiplexing trade-off.

Fig.~\ref{Fig6} illustrates the ergodic capacity versus $N$ for different normalized aperture sizes $W \in \{0.5, 2.0, 5.0\}$ at an SNR of $20$ dB to highlight the DoF saturation effect. Initially, capacity increases rapidly with $N$, but this gain plateaus once the DoF limit is reached. For a compact size ($W=0.5$), capacity saturates early ($ N\approx2$) due to high spatial correlation. In contrast, a large aperture ($W=5.0$) supports a much higher DoF, allowing growth until a larger $N$. This shows that the spatial dimension $W$ is the primary resource dictating capacity.

Fig.~\ref{Fig7} illustrates the EE versus spectral efficiency. Using the default system configuration, we compare FPA with FAS for $N=4$ and $N=64$. The results verify the non-monotonic EE trade-off. In the low-rate regime, $N=64$ shows lower EE due to dominant scanning costs ($N\tau P_{\rm scan}$). However, as the spectral efficiency target increases, the system enters a diversity-dominant regime where $N=64$ outperforms $N=4$ by reducing the required transmit SNR. At very high data rates, the performance of $N=64$ degrades sharply because the large switching overhead consumes the effective blocklength. Thus, $N=4$ becomes more efficient again at the highest rates.

Fig.~\ref{Fig11} compares the effective achievable rate under finite blocklength and infinite blocklength regimes. The simulation is configured with a restricted latency budget $L_{\mathrm{tot}} = 250$, aperture $W = 1.0$, switching delay $\tau = 1.0$, and SNR $= 10$ dB. The infinite blocklength model consistently overestimates performance by ignoring the HRLLC channel dispersion penalty. The two models may lead to different port-dimensioning decisions; for example, the infinite-blocklength model suggests $N^{\star}=40$, whereas the finite-blocklength model identifies $N^{\star}=41$. As $N$ increases, the finite blocklength curve experiences a ``penalty explosion" where the decreasing effective blocklength amplifies the dispersion penalty, widening the gap between the two models.

Finally, Fig.~\ref{Fig12} shows the joint impact of port dimensioning and switching delay on system performance,  evaluated at an SNR of $10$ dB with a latency budget of $L_{\mathrm{tot}}=400$. The results show a clear non-monotonic surface with several key physical insights. First, for small $N$, the rate increases rapidly because the system captures more spatial diversity. However, once $N$ exceeds the physical DoF limit, the rate saturates because the ports become highly correlated. Meanwhile, increasing the switching delay $\tau$ significantly reduces the rate by shrinking the effective blocklength. Therefore, the optimal $N$ follows a downward trajectory as $\tau$ grows, showing a trade-off between diversity gain and time overhead. Also, the largest rates are achieved when the switching delay is small, where the system can best exploit spatial freedom without excessive scanning loss.

\vspace{-2mm}
\section{Conclusion}\label{sec:conclusion}
This paper investigated delay-aware FAS for 6G HRLLC under finite blocklength operation. By jointly considering spatial correlation, port switching delay, and short-packet coding, we showed that increasing the number of ports improves spatial diversity but reduces the effective blocklength. Based on an effective spatial-DoF representation, closed-form analytical approximations were derived for the average BLER and achievable rate. The analysis revealed a reliability-rate-latency trade-off and characterized switching-delay thresholds for FAS gains over FPAs. The discrete port-dimensioning problems were solved by finite search over the feasible integer set. Numerical results confirmed that FAS can provide HRLLC gains when the switching delay is sufficiently small. A limitation is that channel estimation errors, CSI aging, and multi-antenna BS transmission are not explicitly modeled; robust FAS-HRLLC design under imperfect/outdated CSI and joint BS beamforming with FAS port selection are left for future work.

\end{document}